\documentclass[11pt,reqno]{amsart}
\usepackage{amsaddr}
\usepackage[left=2.5cm,right=2.5cm,top=2cm,bottom=3cm]{geometry}
\usepackage[usenames,dvipsnames]{xcolor}
\usepackage{colortbl}
\usepackage[shortlabels]{enumitem}
\usepackage{amsmath,amsthm,amssymb,bm,bbm,dsfont}
%%%%
\usepackage{thmtools}
\usepackage{thm-restate}
%%%%

\usepackage{mathtools}\mathtoolsset{centercolon}
\mathtoolsset{showonlyrefs=true}
\usepackage[mathscr]{eucal}
\usepackage{upgreek}
\usepackage{setspace}
\pagestyle{plain}
\usepackage{graphicx}
\usepackage{verbatim}
\usepackage{float}
\usepackage{placeins}
\usepackage{array}
\usepackage{booktabs}

\usepackage{threeparttable}
\usepackage[update,prepend]{epstopdf}
\usepackage{multirow}
\usepackage{amsfonts,amssymb,dsfont}
\usepackage[abs]{overpic}
\usepackage{xfrac}

\usepackage{tikz}
\usepackage{tikzscale}
\usetikzlibrary{calc}
\usetikzlibrary{patterns}
\usetikzlibrary{decorations.pathreplacing}
\usepackage{blox}
%%%%

\bXLineStyle{-}

\definecolor{dullmagenta}{rgb}{0.4,0,0.4}   % #660066
\definecolor{darkblue}{rgb}{0,0,0.4}
\usepackage{hyperref}
\hypersetup{
	%	bookmarks=false,         % show bookmarks bar?
	unicode=false,          % non-Latin characters in Acrobat’s bookmarks
	pdftoolbar=true,        % show Acrobat’s toolbar?
	pdfmenubar=true,        % show Acrobat’s menu?
	pdffitwindow=false,     % window fit to page when opened
	pdfstartview={FitH},    % fits the width of the page to the window
	pdftitle={My title},    % title
	pdfauthor={Author},     % author
	pdfsubject={Subject},   % subject of the document
	pdfcreator={Creator},   % creator of the document
	pdfproducer={Producer}, % producer of the document
	pdfkeywords={keyword1} {key2} {key3}, % list of keywords
	pdfnewwindow=true,      % links in new PDF window
	linktocpage=true,
	colorlinks=true,        % false: boxed links; true: colored links
	linkcolor=Red,          % color of internal links (change box color with linkbordercolor)
	citecolor=ForestGreen,  % color of links to bibliography
	filecolor=Magenta,      % color of file links
	urlcolor=BlueViolet,    % color of external links
	%	linkcolor=Blue,         % color of internal links (change box color with linkbordercolor)
	%	citecolor=ForestGreen,  % color of links to bibliography
	%	filecolor=Magenta,      % color of file links
	%	urlcolor=Brown,           % color of external links
}

\usepackage{doi}
\usepackage{caption, subcaption}
\usepackage{enumitem}
\usepackage{shadethm}

\makeatletter
\newcommand{\opnorm}{\@ifstar\@opnorms\@opnorm}
\newcommand{\@opnorms}[1]{%
	\left|\mkern-1.5mu\left|\mkern-1.5mu\left|
	#1
	\right|\mkern-1.5mu\right|\mkern-1.5mu\right|
}
\newcommand{\@opnorm}[2][]{%
	\mathopen{#1|\mkern-1.5mu#1|\mkern-1.5mu#1|}
	#2
	\mathclose{#1|\mkern-1.5mu#1|\mkern-1.5mu#1|}
}
\makeatother

% correct bad hyphenation here
\hyphenation{sub-additive}

%% Replace \Diamond by \diamondsuit except if we require the contrary
\makeatletter
\ifx\@NODS\undefined%

\else%
\fi
\makeatother

%% Replace \mathbb by \mathds except if we require the contrary
\makeatletter
\ifx\@NODS\undefined%

\let\mathbb=\mathds
\else%
\fi
\makeatother

\usepackage{xargs}
\usepackage[prependcaption]{todonotes}
\newcommandx{\eric}[2][1=]{\todo[inline, author={Eric}, linecolor=yellow,backgroundcolor=yellow!25,bordercolor=yellow,#1]{#2}}

\newcommandx{\ericside}[2][1=]{\todo[author={Eric}, linecolor=yellow,backgroundcolor=yellow!25,bordercolor=yellow,#1]{#2}}

\DeclareMathOperator{\Tr}{Tr}
\DeclareMathOperator{\tr}{Tr}
\DeclareMathOperator{\e}{\mathrm{e}}

\newcommand{\DP}{D^{\mathbb{P}}}
\newcommand{\QP}{Q^{\mathbb{P}}}

\newcommand{\be}{{\mathbf e}}

\newcommand{\cH}{{\mathcal{H}}}
\newcommand{\cP}{{\mathcal{P}}}

\newcommand{\cX}{{\mathcal{X}}}

\newcommand{\cC}{{\mathcal{C}}}
\newcommand{\cR}{{\mathcal{R}}}

\newcommand{\cM}{{\mathcal{M}}}
\newcommand{\cE}{{\mathcal{E}}}
\newcommand{\cD}{{\mathcal{D}}}
\newcommand{\cB}{{\mathcal{B}}}
\newcommand{\cK}{{\mathcal{K}}}
\newcommand{\cN}{{\mathcal{N}}}
\newcommand{\cU}{{\mathcal{U}}}
\newcommand{\cL}{{\mathcal{L}}}

\def\0{{\mathbf{0}}}
\def\1{{\mathbf{1}}}
\def\2{{\mathbf{2}}}
\def\3{{\mathbf{3}}}
\def\4{{\mathbf{4}}}
\def\5{{\mathbf{5}}}
\def\6{{\mathbf{6}}}

\def\7{{\mathbf{7}}}
\def\8{{\mathbf{8}}}
\def\9{{\mathbf{9}}}

%%%% yfonts packages %%%%

\def\be{\begin{equation}}
	\def\ee{\end{equation}}
\def\bea{\begin{eqnarray}}
	\def\eea{\end{eqnarray}}

\def\eps{\varepsilon}

%%%%%%%%%%%%%%%%%%%%% Milan %%%%%%%%%%%%%

%\def\S{{\mathcal S}}

\def\ep{\varepsilon}

\theoremstyle{plain}
\newtheorem{lemm}{Lemma} %[section]
\newtheorem{theo}[lemm]{Theorem} %[section]%[chapter]
\newtheorem{coro}[theo]{Corollary} %[section]
%\newshadetheorem{question}{Question}
\newshadetheorem{conj}{Conjecture}

\theoremstyle{definition}
\newtheorem{defn}{Definition}[section]

\theoremstyle{remark}
\newtheorem{remark}{Remark}[section]

%\numberwithin{equation}{section}

\begin{document}
	
	%%%%% How to disable amsart.cls to capitalize the article title?
	\let\origmaketitle\maketitle
	\def\maketitle{
		\begingroup
		\def\uppercasenonmath##1{} % this disables uppercasing title
		\let\MakeUppercase\relax % this disables uppercasing authors
		\origmaketitle
		\endgroup
	}
	%%%%%%%%%%%%%%%%%%%%%%%%%%%%%%%%
	
	\title{\bfseries \Large{ Strong Converse for Classical-Quantum\\ Degraded Broadcast Channels }}
	
	\author{ {Hao-Chung Cheng$^{1}$, Nilanjana Datta$^{1}$, Cambyse Rouz\'e$^{1,2}$ }}
	\address{\small  		
	$^{1}$Department of Applied Mathematics and Theoretical Physics, Centre for Mathematical Sciences\\University of Cambridge, Cambridge CB3 0WA, United Kingdom\\
	$^{2}$Technische Universit{\"a}t M{\"u}nchen, 80333 M{\"u}nchen, Germany}

	\email{\href{mailto:HaoChung.Ch@gmail.com}{HaoChung.Ch@gmail.com}, 
	\href{mailto:n.datta@statslab.cam.ac.uk}{n.datta@statslab.cam.ac.uk},
	\href{mailto:rouzecambyse@gmail.com}{rouzecambyse@gmail.com}}	
	
	\date{\today}

	\begin{abstract}
	We consider the transmission of classical information through a degraded broadcast channel, whose outputs are two quantum systems, with the state of one being a degraded version of the other. Yard \emph{et al.~}[\href{https://ieeexplore.ieee.org/document/6034754}{\textit{IEEE Trans.~Inf.~Theory}, \textbf{57}(10):7147--7162, 2011}] proved that the capacity region of such a channel is contained in a region characterized by certain entropic quantities. 
	We prove that this region satisfies the strong converse property, that is, the maximal probability of error incurred in transmitting information at rates lying outside this region converges to one exponentially in the number of uses of the channel. In establishing this result, we prove a second-order Fano-type inequality,
	which might be of independent interest. A powerful analytical tool which we employ in our proofs is the tensorization property of the quantum reverse hypercontractivity for the quantum depolarizing semigroup. 
	
	\end{abstract}
	
	\maketitle

\section{Introduction}

A broadcast channel models noisy one-to-many communication, examples of which abound in our daily lives. It can be used to transmit information to two\footnote{More generally, one can consider even more than two receivers.} receivers (say, Bob and Charlie) from a single sender (say, Alice). It was introduced by Cover in 1972 \cite{Cov72}. In the most general case, part of the information (the {\emph{common part}}) is intended for both the receivers, while part of the information ({\emph{the private part}}) consists of information intended for Bob and Charlie separately. Classically, the so-called discrete memoryless broadcast channel is modelled by a conditional probability distribution $\{p_{YZ|X}(y,z|x)\}$, where the random variables $X,Y,Z$ take values in ${\mathcal{X}}$ (the {\emph{input alphabet}}), and alphabets ${\mathcal{Y}}$ and ${\mathcal{Z}}$ (the {\emph{output alphabets}}), respectively. Hence, $X$ models Alice's input to the channel, while $Y$ and $Z$ correspond to the outputs received by Bob and Charlie, respectively. Suppose Alice sends her messages (or information) through multiple (say $n$) successive uses of such a channel, with $R_B$ and $R_C$ being the rates at which she transmits private information to Bob and Charlie, respectively, and $R$ being the rate at which she transmits common information to both of them. A triple $(R_B, R_C, R)$ is said to be an {\emph{achievable rate triple}} if the probability that an error is incurred in the transmission of the messages vanishes in the limit $n \to \infty$. In other words, these rates correspond to reliable transmission of information. Obviously, there is a tradeoff between these three rates: if one of them is high, the others are lowered in order to ensure that the common- as well as private information are transmitted reliably. The set of all achievable rate triples defines the {\emph{achievable rate region}}, and its closure defines the {\emph{capacity region}} of the broadcast channel. 

Determining the capacity region for a general broadcast channel remains a challenging open problem. However, certain special cases have been solved (see e.g.~\cite{Ber73, Wyn73, Ber74, Ber77, AGK76, Gal74, vdM75, Cov75, KM77, vdM77, Sat78, Mar79, Gam79, GvdM81, Cov98, Nai10, Ooh15a, Ooh15b, Ooh16, GK11}), the first of these being the case of the so-called {\emph{degraded broadcast channel}} (DBC). This is a broadcast channel for which the message that Charlie receives is a degraded version of the message that Bob receives. In other words, there exists a stochastic map which when acting on the message that Bob receives, yields the message that Charlie receives. Hence $p_{Y,Z|X}(y,z|x) = p_{Z|Y}(z|y)p_{Y|X}(y|x)$, and the three random variables $X,Y$ and $Z$ form a Markov Chain $X-Y-Z$.  Let us focus on the case in which there is no common information\footnote{The capacity region with common information can be obtained from the one without common information (see e.g. \cite[Chapter 5.7]{GK11}).} and hence the capacity region is specified by achievable rate pairs $(R_B, R_C)$. In this case, the capacity region has been shown to be given by~\cite{Ber73, Gal74, AK75}
\begin{align}
& \bigcup \{(R_B, R_C) \,: R_B \leq I(X;Y|U), R_C \leq I(U;Z)\},
\end{align}
where the union is over all joint probability distributions $\{p_{UX}(u,x)\}_{u \in \cU, x \in \cX}$, with $U$ being an auxiliary random variable taking values in an alphabet ${\mathcal{U}}$ with cardinality $|{\mathcal{U}}| \leq \min \{|{\mathcal{X}}|,|{\mathcal{Y}}|, |{\mathcal{Z}}|\} + 1.$ Here $I(X;Y|U)$ and $I(U;Z)$ denote the conditional mutual information (between $X$ and $Y$ conditioned on $U$) and the mutual information between $U$ and $Z$, respectively, and are the entropic quantities characterizing the achievable rate region.

In this paper, we consider a {\emph{classical-quantum degraded broadcast channel}} (c-q DBC), which we denote by $\mathscr{W}^{X \to BC}$. Here too, the input to the channel is classical and denoted by a random variable $X$ but the outputs are states of quantum systems $B$ and $C$. The channel is degraded in the sense that there exists some other quantum channel (say $\cN$), which when acting on the state of the system $B$ yields the state of the system $C$. Bob and Charlie receive the systems $B$ and $C$ respectively, and perform measurements on them in order to infer the classical messages that Alice sent to each of them. The channel is assumed to be memoryless and the achievable rates are computed in the asymptotic limit ($n \to \infty$, where $n$ denotes the number of successive uses of the channel). The achievable rate region for this channel was studied by Yard {\emph{et al.}}~\cite{YHD11} and later by Savov and Wilde~\cite{SW12}. Let $R_B$ and $R_C$ denote the rates at which Alice sends private information to Bob and Charlie respectively, and let $R$ be her rate of transmission of common information to both of them. 
See Figure~\ref{fig:protocol} for the illustration.

It was shown in \cite{YHD11} (see also \cite{SW12}) that any rate triple $(R, R_B, R_C) $ satisfying 
\begin{align}
\begin{split}
&R_B \leq I(X;B|U)_\sigma,\\
&R + R_C  \leq I(U;C)_\sigma,
%R + R_B + R_C &\leq I(X;B)_\sigma,
\end{split}
\label{rf}
\end{align}
lies in the achievable rate region\footnote{For a precise definition of the achievable rate region and the capacity region, see Section \ref{sec:c-q_DBC}.}. Here the entropic quantities, appearing in the above inequalities are, taken with respect to a state $\sigma_{UXBC}$ of the following form
\begin{align}
\sigma_{UXBC} &= \sum_{(u,x)\in \mathcal{U}\times \mathcal{X} } p_U(u)\, p_{X|U}(x|u)\, |u \rangle \langle u|_U \otimes |x \rangle \langle x|_X \otimes \sigma^x_{BC} .
%\,\, \in \cD(\cH_U \otimes \cH_X \otimes \cH_B \otimes \cH_C).
\end{align}
Here we use $X$ and $U$ to denote both random variables (taking values in finite sets $\cX$ and $\cU$, respectively), as well as quantum systems whose associated Hilbert spaces, $\cH_X$ and $\cH_U$, have complete orthonormal bases $\{|x\rangle\}$ and $\{|u\rangle\}$ labelled by the values taken by these random variables\footnote{Yard {\emph{et al.}}~\cite{YHD11} showed that it suffices to consider a random variable $U$ for which $|{\mathcal{U}}| \leq  \min\{|{\mathcal{X}}|, d_B^2 + d_C^2 -1 \}$. }. 
Hence, $|\mathcal{U}| := \dim \mathcal{H_U}$ and $|\mathcal{X}| := \dim \mathcal{H_X}$.

\smallskip

Moreover, Yard {\emph{et al.}} \cite[Theorem 2]{YHD11} established that the capacity region for such a c-q DBC is contained in a region specified by the following inequalities:
\begin{align}
\begin{split} \label{rfc}
&R_B  \leq I(X;B|U)_\omega,\\
&R + R_C  \leq I(U;C)_\omega,
\end{split}
\end{align}
for some state $\omega_{UXBC}$ of the following (more general) form, in which the system $U$ is a quantum system:
$$
\omega_{UXBC} = \sum_{x\in\mathcal{X} } p_X(x) \rho^x_{U}\otimes |x \rangle \langle x| \otimes \rho^x_{BC},
$$
where $\forall$ $x \in \cX$, $\rho^x_{U}$ is a state of the quantum system $U$.
\medskip

The above result establishes that for any rate triple $(R_B, R_C, R)$ which does not satisfy the inequalities \eqref{rf} for $\rho_{UXBC}$ of the above form, the {{maximum}} probability of incurring an error in the transmission of information is bounded away from zero, even in the asymptotic limit. In this paper, we show that the region spanned by such rate triples satisfies the so-called {\emph{strong converse property}}, that is, for any rate triple which lies outside this region, the {maximal} probability of error in the transmission of information is not only bounded away from zero but goes to one in the asymptotic limit. Moreover, the convergence to one is exponential in $n$.
A precise statement of this result is given by Corollary~\ref{coro:exponential} of Section \ref{main} below.
We first establish this strong converse property in the case in which no common information is sent (i.e.~$R=0$) and then discuss how this result can be extended to the general case in which both private and common information is sent by Alice. 

\begin{figure}[ht]
	\centering
	\includegraphics[width = 1\linewidth]{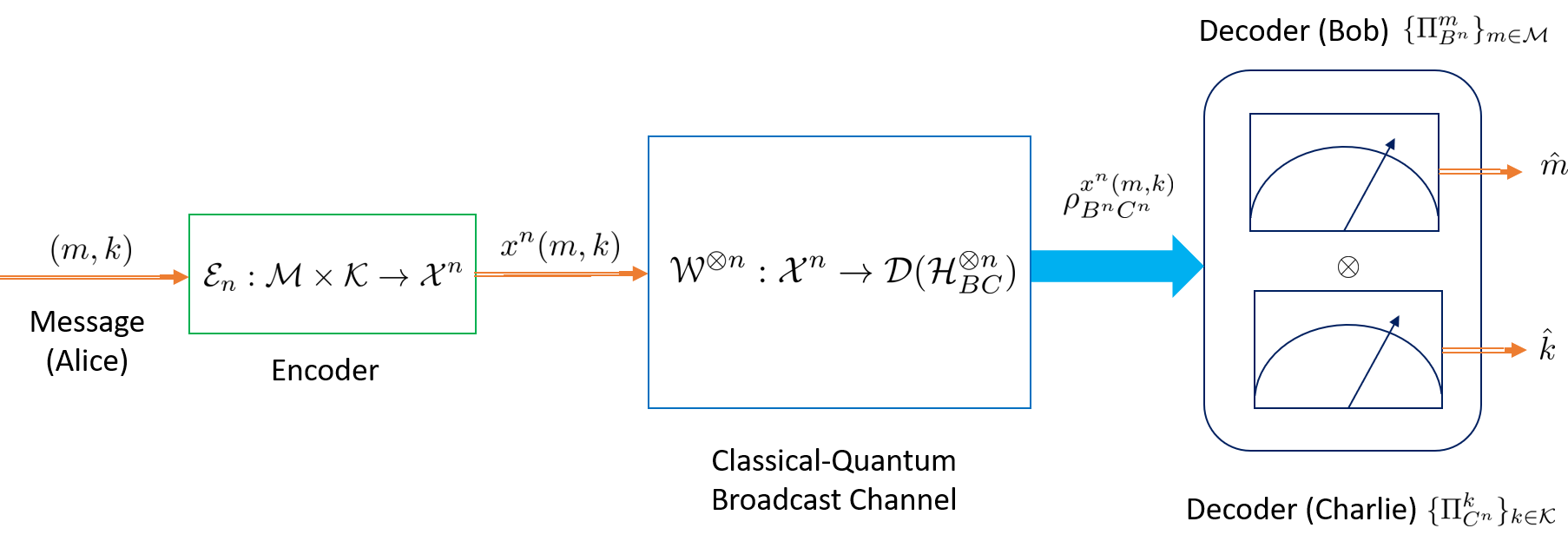}
	\caption{	The task of transmitting private information by Alice to Bob and Charlie through a classical-quantum broadcast channel. 
	We refer the readers to Section~\ref{sec:c-q_DBC} for detailed notation.}
		\label{fig:protocol}
\end{figure}

\textbf{Organization of the paper:} In Section~\ref{sec:notation}, we introduce necessary notation and the information-theoretic protocol of c-q DBC coding. 
In Section~\ref{main}, we state our main results.
In Section~\ref{sec:sc}, we prove a second-order Fano-type inequality for c-q channel coding, which is a main ingredient for establishing the second-order strong converse bound, which we prove in Section~\ref{sec:sc_DBC}.
%Lastly, we conclude this paper with a discussion of the results and future directions in section~\ref{sec:discussions}.

\section{Notations and Definitions} \label{sec:notation}
Throughout this paper, we consider finite-dimensional Hilbert spaces, and discrete random variables which take values in finite sets. The subscript of a Hilbert space (say $B$), denotes the quantum system (say $\cH_B$) to which it is associated. We denote its dimension as $d_B := {\rm{dim}}\,{\mathcal{H}}_B$. 
Let $\mathbb{N}$, $\mathbb{R}$, and $\mathbb{R}_{\geq 0}$ be the set of natural numbers, real numbers, and non-negative real numbers, respectively.
Let ${\mathcal{B}}({\mathcal{H}})$ denote the algebra of linear operators acting on a Hilbert space ${\mathcal{H}}$, ${\mathcal{P}}({\mathcal{H}}) \subset  {\mathcal{B}}({\mathcal{H}})$ denote the set of positive semi-definite operators, ${\mathcal{D}}({\mathcal{H}}) \subset {\mathcal{P}}({\mathcal{H}})$ the set of quantum states (or density matrices): ${\mathcal{D}}({\mathcal{H}}) :\{ \rho \in {\mathcal{P}}({\mathcal{H}})\,:\, \tr [\rho] = 1\}$. A quantum operation (or quantum channel) is a superoperator given by a linear completely positive trace-preserving (CPTP) map. A quantum operation ${\mathcal{N}}^{A \to B}$ maps operators in ${\mathcal{B}}({\mathcal{H}}_A)$ to operators in ${\mathcal{B}}({\mathcal{H}}_B)$.  A superoperator $\Phi: {\mathcal{B}}({\mathcal{H}}) \to {\mathcal{B}}({\mathcal{H}})$ is said to be unital if $\Phi(\mathbb{I}) = \mathbb{I}$, where $\mathbb{I}$ denotes the identity operator in ${\mathcal{B}}({\mathcal{H}})$ .
We denote the identity superoperator as ${\rm{id}}$.
For any finite set $\mathcal{M}$, a positive-operator valued measure (POVM) on $\mathcal{H}$ is a set of positive semi-definite operators $\{\Pi^m\}_{m \in \mathcal{M}}$ satisfying $\Pi^m \in \mathcal{P(H)}$, for every $m\in\mathcal{M}$, and $\sum_{m\in\mathcal{M}} \Pi^m = \mathbb{I}$.

The von Neumann entropy of a state $\rho$ is defined as $S(\rho):= - \tr [\rho \log \rho]$, with the logarithm being taken to base $2$. 
%We denote the identity operator on $\mathcal{H}_B$ by $\mathbb{I}_B$.
The quantum relative entropy between a state $\rho \in {\mathcal{D}}({\mathcal{H}})$ and a positive semi-definite operator $\sigma$ is defined as
\begin{align}
D(\rho||\sigma) &: = \Tr\left[\rho (\log \rho - \log \sigma)\right].
\end{align}
It is well-defined if ${\rm{supp}}\,\rho \subseteq {\rm{supp}}\,\sigma$, and is equal to $+\infty$ otherwise. Here ${\rm{supp}}\,A$ denotes the support of the operator $A$. 
The quantum relative R\'enyi entropy of order $\alpha$,  for $\alpha \in (0,1)$, is defined as follows ~\cite{Pet86}:
\begin{align}
D_\alpha(\rho||\sigma) &:= \frac{1}{\alpha-1} \log \Tr\left[ \rho^\alpha \sigma^{1-\alpha} \right].
\end{align}
%Note that for $\alpha >1$, for $D_\alpha(\rho||\sigma)$ to be well-defined, one requires the same support condition as for the case of the quantum relative entropy. However, no such condition is necessary for $\alpha <1$. 
It is known that $D_\alpha(\rho||\sigma) \to D(\rho||\sigma)$ as $\alpha \to 1$  (see e.g.~\cite[Corollary 4.3]{Tom16}, \cite{Ume62}).
An important property satisfied by these relative entropies is the so-called {\em{data-processing inequality}}, which is given by
$ D_\alpha(\Lambda(\rho)\|\Lambda(\sigma)) \leq D_\alpha(\rho\|\sigma)$ for all $\alpha\in(0,1)$ and quantum operations $\Lambda$.
This induces corresponding data-processing inequalities for the quantities derived from these relative entropies, such as the quantum mutual information information \eqref{eq:mutual} and the conditional entropy \eqref{eq:conditional}.

For a bipartite state $\rho_{AB} \in \mathcal{D}(\mathcal{H}_A \otimes \mathcal{H}_B)$, the quantum mutual information and the conditional entropy are given in terms of the quantum relative entropy as follows:
\begin{align}
I(A;B)_\rho &= D\left( \rho_{AB} \| \rho_A \otimes \rho_B \right); \label{eq:mutual} \\
H(A|B)_\rho &= -D\left( \rho_{AB} \| \mathbb{I}_A \otimes \rho_B \right). \label{eq:conditional}
\end{align}
\medskip
The following inequality plays a fundamental role in our proofs.
\begin{lemm}
	[Araki-Lieb-Thirring inequality {\cite{Ari76, LT76}}] \label{lemm:ALT}
	For any $A,B\in \mathcal{P}(\mathcal{H})$, and $r\in[0,1]$,
	\begin{align}
	\Tr\left[ B^{\frac{r}{2}} A^r B^{\frac{r}{2}} \right] \leq \Tr\left[ \left( B^\frac12 A B^\frac12 \right)^r\, \right].
	\end{align}
\end{lemm}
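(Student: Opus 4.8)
The plan is to deduce the inequality from \emph{Araki's log-majorization theorem}, after two harmless reductions. First I would replace $A,B$ by $A+\varepsilon\mathbb{I}$ and $B+\varepsilon\mathbb{I}$ and let $\varepsilon\downarrow 0$ at the end (both sides depend continuously on $A,B\in\mathcal{P}(\mathcal{H})$ when $r\in(0,1]$), so that $A,B$ may be assumed positive definite; and I would set aside the endpoints $r=1$ (an equality) and $r=0$ (which then follows by continuity), fixing $r\in(0,1)$. For $M\in\mathcal{P}(\mathcal{H})$ write $\lambda(M)=(\lambda_1(M)\ge\cdots\ge\lambda_d(M))$ for its eigenvalues in non-increasing order, and recall that $x\prec_{\log}y$ for positive vectors $x,y\in\mathbb{R}_{>0}^{d}$ means $\prod_{i=1}^{k}x_i\le\prod_{i=1}^{k}y_i$ for $1\le k\le d-1$ together with $\prod_{i=1}^{d}x_i=\prod_{i=1}^{d}y_i$; since this says exactly that $\log x$ is majorized by $\log y$, Karamata's inequality applied to the convex function $\exp$ yields $\sum_{i=1}^{d}x_i\le\sum_{i=1}^{d}y_i$. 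The external input is Araki's theorem: for all $A,B\in\mathcal{P}(\mathcal{H})$ and all $p\ge 1$,
\[
\lambda\!\left((B^{1/2}AB^{1/2})^{p}\right)\ \prec_{\log}\ \lambda\!\left(B^{p/2}A^{p}B^{p/2}\right).
\]

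Granting this, the lemma is a short manipulation. Apply the displayed log-majorization with $A,B,p$ replaced by $A^{r},B^{r},1/r$ (legitimate since $1/r\ge 1$); recalling $(A^{r})^{1/r}=A$ and $(B^{r})^{1/(2r)}=B^{1/2}$, this reads $\lambda\big((B^{r/2}A^{r}B^{r/2})^{1/r}\big)\prec_{\log}\lambda\big(B^{1/2}AB^{1/2}\big)$. Log-majorization is preserved under raising all entries to the fixed power $r>0$ (this merely rescales the exponents in the defining products, and $t\mapsto t^{r}$ is increasing so the sorting is kept); since $\lambda(M^{1/r})^{r}=\lambda(M)$ and $\lambda(N)^{r}=\lambda(N^{r})$, this gives $\lambda\big(B^{r/2}A^{r}B^{r/2}\big)\prec_{\log}\lambda\big((B^{1/2}AB^{1/2})^{r}\big)$. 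Summing all eigenvalues, which is allowed by the Karamata remark, produces exactly $\Tr\left[B^{r/2}A^{r}B^{r/2}\right]\le\Tr\left[(B^{1/2}AB^{1/2})^{r}\right]$.

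The genuine content is thus Araki's log-majorization, which in the write-up I would simply cite from \cite{Ari76, LT76}; for completeness, its proof passes to $k$-th antisymmetric tensor powers $\wedge^{k}$. Using that $\wedge^{k}$ is multiplicative, maps positive operators to positive operators with $\wedge^{k}(X^{s})=\wedge^{k}(X)^{s}$, and satisfies $\|\wedge^{k}(X)\|_{\infty}=\prod_{i=1}^{k}\lambda_i(X)$ for $X\in\mathcal{P}(\mathcal{H})$, the $k$-th inequality in the log-majorization for $A,B$ follows from the $k=1$ (operator-norm) case $\|(C^{1/2}DC^{1/2})^{p}\|_{\infty}\le\|C^{p/2}D^{p}C^{p/2}\|_{\infty}$ applied to the positive operators $C=\wedge^{k}(B)$ and $D=\wedge^{k}(A)$ (the case $k=d$ being multiplicativity of $\det$). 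Writing $\|C^{1/2}DC^{1/2}\|_{\infty}=\|D^{1/2}C^{1/2}\|_{\infty}^{2}$ and likewise on the other side, this reduces to the Cordes inequality $\|X^{s}Y^{s}\|_{\infty}\le\|XY\|_{\infty}^{s}$ ($s\in[0,1]$), applied with $X=D^{p/2}$, $Y=C^{p/2}$, $s=1/p$, which is itself established by complex interpolation (Hadamard's three-lines theorem for $z\mapsto\langle\varphi,X^{z}Y^{z}\psi\rangle$ on the strip $0\le\mathrm{Re}(z)\le 1$). The main obstacle is exactly this analytic core: the reduction steps are routine, but the Cordes / operator-norm inequality — equivalently, Araki's log-majorization itself — is the nontrivial ingredient, so the cleanest course is to invoke it from the cited literature rather than to reprove it.
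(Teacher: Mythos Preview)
Your argument is correct and is the standard modern route to the Araki--Lieb--Thirring inequality: reduce to invertible $A,B$, invoke Araki's log-majorization $\lambda\bigl((B^{1/2}AB^{1/2})^{p}\bigr)\prec_{\log}\lambda\bigl(B^{p/2}A^{p}B^{p/2}\bigr)$ for $p\ge 1$ with the substitution $(A,B,p)\mapsto(A^{r},B^{r},1/r)$, push the power $r$ through the log-majorization, and finish with the fact that log-majorization implies weak majorization (hence the trace inequality). The sketch of Araki's theorem via antisymmetric tensor powers and the Cordes inequality is also accurate.

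There is nothing to compare against in the paper itself: the lemma is stated there with a citation and no proof, as it is a classical inequality used only as a tool. So your write-up goes beyond what the paper does. Two minor remarks. First, the reference tagged \texttt{Ari76} in this paper is actually Arimoto's 1976 paper on random coding exponents, not Araki's log-majorization paper; if you want to cite Araki's theorem proper you should point to H.~Araki, \emph{Lett.\ Math.\ Phys.}\ \textbf{19} (1990), 167--170, while the Lieb--Thirring reference already covers the inequality in the form stated. Second, the endpoint $r=0$ is a convention issue (support projections versus the identity) rather than a substantive one; your continuity argument handles it once $A,B$ are taken positive definite, and for general positive semi-definite operators the $r=0$ case is typically excluded or interpreted separately.
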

\medskip

The proof of one of our main results (Theorem~\ref{theo:Fano}) employs a powerful analytical tool, namely, the so-called {\em{quantum reverse hypercontractivity}} of a certain quantum Markov semigroup (QMS) and its tensorization property. Let us introduce these concepts and the relevant results in brief. For more details see e.g.~\cite{BDR18} and references therein. The QMS that we consider is the so-called {\em{generalized quantum depolarizing semigroup (GQDS)}}. In the Heisenberg picture, for any state $\sigma >0$ on a Hilbert space $\cH$, the GQDS with invariant state $\sigma$ is defined by a one-parameter family of linear completely positive (CP) unital maps $\left(\Phi_t\right)_{t \geq 0}$, such that for any $X \in \cB(\cH)$,
\begin{align}\label{eq-gqds}
\Phi_t(X) = \e^{-t}X + (1-\e^{-t}) \tr[\sigma X] \,\mathbb{I}.
\end{align}
In the Schr\"odinger picture, the corresponding QMS is given by the family of CPTP maps $\left(\Phi^\star_t\right)_{t \geq 0}$, such that
$$ \tr[Y \Phi_t (X)] = \tr [\Phi^\star_t(Y) X], \quad \forall, \,\, X,Y \in \cB(\cH).$$
The action of $\Phi^\star_t$ on any state $\rho \in \cD(\cH)$ is that of a generalized depolarizing channel, which keeps the state unchanged
with probability $e^{-t}$, and replaces it by the state $\sigma$ with probability $(1 - \e^{-t})$:
$$ \Phi^\star_t(\rho) = \e^{-t}\rho + (1-\e^{-t}) \sigma\,.$$
 Note that $ \Phi^\star_t(\sigma) = \sigma$ for all $t\geq 0$, and that $\sigma$ is the unique invariant state of the evolution.
\medskip

To state the property of quantum reverse hypercontractivity, we define, for any $X\in\mathcal{B(H)}$, the non-commutative weighted $L_p$ norm {with respect to} the state $\sigma \in \mathcal{D}(\cH)$, for any $p \in \mathbb{R}\backslash\{0\}$\footnote{For $p<1$, these are pseudo-norms, since they do not satisfy the triangle inequality. For $p<0$, they are only defined for $X>0$ and for a non-full rank state by taking them equal to $\big( \Tr\big[ \big| \sigma^{-\frac{1}{2{p}}} X^{-1} \sigma^{-\frac{1}{2 {p}}} \big|^{{-p}} \big]\big)^{1/p}$.}:
\begin{align} \label{eq:wLp}
\left\| X \right\|_{p, \sigma} := \left( \Tr\left[ \left| \sigma^{\frac{1}{2{p}}} X \sigma^{\frac{1}{2 {p}}} \right|^{{p}} \right] \right)^{\frac{1}{{p}}}.
\end{align} 
\smallskip
A QMS $\left(\Phi_t\right)_{t \geq 0}$ is said to be reverse $p$-contractive for $p <1$, if
\begin{align}\label{contract}
|| \Phi_t(X)||_{p,\sigma} \geq ||X||_{p,\sigma}, \quad \forall \, X >0.
\end{align}
The GQDS can be shown to satisfy a stronger inequality: $\forall$ $ p<q <1$,
\begin{align}\label{QRHC}
  || \Phi_t(X)||_{p,\sigma} &\geq ||X||_{q,\sigma}, \quad \forall \, X >0,
\end{align}
for
\begin{align}\label{tcond}
t &\geq \frac{1}{4 \alpha_1 (\cL)} \log \left(\frac{p-1}{q-1}\right),
\end{align}
where $\alpha_1 (\cL)>0$ is the so called {\em{modified logarithmic Sobolev constant}}, and $\cL$ denotes
the generator of the GQDS, which is defined through the relation $\Phi_t(X) = \e^{-t\cL}(X)$ and is given by
$$\cL(X) = X - \tr[\sigma X] \mathbb{I}.$$
The inequality \eqref{QRHC} is indeed stronger than \eqref{contract} since the map $p \mapsto ||X||_{p,\sigma}$ is non-decreasing.
\medskip

In the context of this paper, instead of the GQDS defined through \eqref{eq-gqds}, we need to consider the QMS $\left(\Phi_{t, x^n}\right)_{t \geq 0}$,
with $\Phi_{t, x^n}$ being a CP unital map acting on $\cB(\cH^{\otimes n})$, and being labelled by sequences
$x^n \equiv (x_1,x_2, \ldots, x_n) \in \cX^n$, where $\cX$ is a finite set. For any $x \in \cX$, let $\rho^x \in \cD(\cH)$. Further, let
	\begin{align}
	  \rho^{x^n} &:= \rho^{x_1} \otimes \cdots \otimes \rho^{x_n} \,\in \cD(\cH^{\otimes n}).
	 \end{align}
Then,
  \begin{align}
    \Phi_{t,x^n} &:= \Phi_{t,x_1} \otimes \cdots \otimes \Phi_{t,x_n},
    \end{align}
    where $(\Phi_{t,x})_{t\ge 0}$ is a GQDS with invariant state $\rho^x$.
 We denote by $\cK_{x^n}=\sum_{i=1}^n\widehat{\cL}_{x_i}$ the generator of $(\Phi_{t,x^n})_{t\ge 0}$ where $\widehat{\cL}_{x_i}={\rm{id}}^{\otimes i-1} \otimes  {\cL}_{x_i} \otimes {\rm{id}}^{\otimes n-i}$, with ${\cL}_{x_i}$ being the generator of the GQDS $(\Phi_{t,x_i})_{t\geq 0}$. 
 If the modified logarithmic Sobolev constant $\alpha_1(\mathcal{K}_{x^n})$ is independent of $n$, or satisfies an $n$-independent lower bound, then it is called the \emph{tensorization property} of the GQDS. 
 The following tensorization property of the quantum reverse hypercontractivity of the above tensor product of GQDS was established in~\cite{BDR18} ( {See~\cite{mossel2013reverse} for its classical counterpart, as well as \cite{CKMT15} for its extension to doubly stochastic QMS}):
\begin{theo}[Quantum reverse hypercontractivity for tensor products of depolarizing semigroups {\cite[Corollary 17, Theorem 19]{BDR18}}] \label{lemm:RHC}
~\\
  For the QMS $\left(\Phi_{t, x^n}\right)_{t \geq 0}$ introduced above, for any $\mathsf{p}\leq \mathsf{q}< 1$ and for any $t$ satisfying $t\geq \log \frac{\mathsf{p}-1}{\mathsf{q}-1}$, the following inequality holds:
	\begin{align}
	\left\| \Phi_{t,x^n}(G_n) \right\|_{\mathsf{p},\rho^{x^n}}
	\geq \left\| G_n \right\|_{\mathsf{q},\rho^{x^n}}, \quad \forall\, G_n>0.
	\end{align}
	In other words, $\alpha_1(\cK_{x^n})\ge \frac{1}{4}$.
\end{theo}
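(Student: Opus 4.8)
The plan is to reduce the tensorized statement to the single-site reverse hypercontractivity inequality \eqref{QRHC} together with the tensorization of the modified logarithmic Sobolev constant. First I would recall that for a single GQDS $(\Phi_{t,x})_{t\ge 0}$ with invariant state $\rho^x$ and generator $\cL_x(X) = X - \tr[\rho^x X]\mathbb{I}$, one has $\alpha_1(\cL_x) \ge \tfrac14$; this is the content of the single-system modified log-Sobolev estimate for the generalized depolarizing semigroup, established in \cite{BDR18} (and which follows from a direct computation of the entropy-production functional $-\tfrac{\d}{\d t}\Ent_{\rho^x}(\Phi_t^\star(\rho))$ against $\Ent_{\rho^x}(\rho)$, using that the depolarizing generator has spectral gap $1$). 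Plugging this bound into \eqref{tcond} with $\cL$ replaced by $\cL_x$ gives, for each fixed $x$, the single-site reverse hypercontractive inequality $\|\Phi_{t,x}(G)\|_{\mathsf p,\rho^x} \ge \|G\|_{\mathsf q,\rho^x}$ for all $G>0$ and all $t \ge \log\frac{\mathsf p - 1}{\mathsf q - 1}$.

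The second ingredient is the tensorization of $\alpha_1$: one needs $\alpha_1(\cK_{x^n}) \ge \tfrac14$, where $\cK_{x^n} = \sum_{i=1}^n \widehat{\cL}_{x_i}$ generates the product semigroup $\Phi_{t,x^n} = \Phi_{t,x_1}\otimes\cdots\otimes\Phi_{t,x_n}$ whose invariant state is the (non-tracial, inhomogeneous) product $\rho^{x^n} = \rho^{x_1}\otimes\cdots\otimes\rho^{x_n}$. This is exactly \cite[Theorem 19]{BDR18}, whose proof rests on the additivity of the relative entropy $\Ent_{\rho^{x^n}}$ under tensor products together with a ``chain rule'' / approximate tensorization argument that bounds the global entropy production below by the sum of the local entropy productions. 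Since the individual $\alpha_1(\cL_{x_i}) \ge \tfrac14$ are uniformly bounded below independently of $x_i$, the tensorized constant inherits the same bound $\alpha_1(\cK_{x^n}) \ge \tfrac14$, uniformly in $n$ and in the sequence $x^n$.

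Finally I would combine the two: with $\alpha_1(\cK_{x^n}) \ge \tfrac14$ in hand, apply the general equivalence between a lower bound on the modified log-Sobolev constant and reverse hypercontractivity — i.e.\ the analogue of the implication \eqref{tcond} $\Rightarrow$ \eqref{QRHC} at the level of the product system — to conclude that
\begin{align}
\left\| \Phi_{t,x^n}(G_n) \right\|_{\mathsf p,\rho^{x^n}} \geq \left\| G_n \right\|_{\mathsf q,\rho^{x^n}}, \qquad \forall\, G_n > 0,
\end{align}
whenever $t \ge \frac{1}{4\alpha_1(\cK_{x^n})}\log\frac{\mathsf p - 1}{\mathsf q - 1}$, and in particular whenever $t \ge \log\frac{\mathsf p - 1}{\mathsf q - 1}$, which is the claimed condition.

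I expect the main obstacle to be establishing the tensorization $\alpha_1(\cK_{x^n}) \ge \tfrac14$ itself: unlike the hypercontractivity (ordinary, $L_p$-$L_q$ with $p,q>1$) case, reverse hypercontractivity constants do tensorize cleanly, but the proof that $\alpha_1$ of a tensor product of distinct depolarizing generators (with distinct, possibly non-full-rank, invariant states) is bounded below by the minimum of the local constants requires the careful approximate-tensorization machinery of \cite{BDR18}; the subtlety is that $\rho^{x^n}$ is not a tensor power of a single state, so one cannot simply invoke a symmetric argument. Since the statement explicitly cites \cite[Corollary 17, Theorem 19]{BDR18}, the cleanest route for us is to invoke those results directly rather than reproduce their proofs, and to present the argument above as the assembly of (i) the single-site mLSI bound, (ii) its tensorization, and (iii) the mLSI-to-reverse-hypercontractivity implication.
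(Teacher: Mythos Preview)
The paper does not prove this theorem at all: it is stated as a tool and attributed directly to \cite[Corollary~17, Theorem~19]{BDR18}, with no argument given beyond the citation. Your proposal is therefore not comparable to any proof in the paper; what you have written is a reasonable high-level sketch of the strategy in \cite{BDR18} (single-site mLSI bound $\alpha_1(\cL_x)\ge \tfrac14$, tensorization of $\alpha_1$ for products of possibly distinct depolarizing generators, and the standard mLSI $\Rightarrow$ reverse hypercontractivity implication), and you correctly identify that the nontrivial step is the tensorization with inhomogeneous invariant states. For the purposes of this paper, simply citing \cite{BDR18} as you suggest at the end is exactly what the authors do.
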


\subsection{Classical-quantum (c-q) broadcast channel} \label{sec:c-q_DBC}
We define the \emph{classical-quantum (c-q) degraded broadcast channel} as follows.

\begin{defn} \label{defn:channel}
	A \emph{classical-quantum broadcast channel} ${\mathscr{W}}^{X\to BC}$ is a quantum operation defined as follows: 
	\begin{align}
	\begin{split}
	\mathscr{W} \equiv {\mathscr{W}}^{X\to BC}: \mathcal{X} &\to \mathcal{D}(\mathcal{H}_B\otimes \mathcal{H}_C); \\
	x &\mapsto \rho_{BC}^x.
	\end{split}
	\end{align}
	Here $X$ is a random variable which takes values in a finite set ${\mathcal{X}}$. A classical input $x \in {\mathcal{X}}$ to this channel, yields a quantum state $\rho_{BC}^x$ as output.
	Moreover, such a channel is said to be a c-q degraded broadcast channel (c-q DBC), if there exists a quantum channel ${\mathcal{N}}^{B \to C}$ such that $ \forall \, x \in {\mathcal{X}}$ the reduced state of the system $C$, $\rho^x_{C} = \tr_B (\rho^x_{BC}),$ satisfies 
	\begin{align}
	\rho^x_{C} &= {\mathcal{N}}^{B \to C}(\rho^x_B), \quad {\hbox{with}}\,\, \rho^x_B= \tr_C (\rho^x_{BC}).
	\end{align}
	Here $\tr_B$ and $\tr_C$ denote the partial traces over ${\mathcal{H}}_B$ and ${\mathcal{H}}_C$, respectively. As in the classical case, we consider Alice to be the sender (she hence holds $X$) while the quantum systems $B$ and $C$ are received by Bob and Charlie, respectively.
\end{defn}

As in the classical  case, we assume the channel to be \emph{memoryless} and consider multiple (say $n\in\mathbb{N}$) successive uses of it.
In this scenario, one hence considers a sequence of channels $\{\mathscr{W}^{\otimes n} \}_{n\in\mathbb{N}}$ such that for all $x^n \equiv (x_1,x_2,\ldots,x_n) \in \mathcal{X}^n$,
\begin{align}
\mathscr{W}^{\otimes n}(x^n) := \mathscr{W}(x_1) \otimes \mathscr{W}(x_2) \otimes\cdots \otimes \mathscr{W}(x_n).
\end{align}

As mentioned above, we first focus on the case in which there is no common information, and hence $R=0$. In this case, the inequalities \eqref{rfc} reduce to
\begin{align}
\begin{split} \label{rfcnc}
R_B & \leq I(X;B|U)_\omega,\\
R_C & \leq I(U;C)_\omega,
\end{split}
\end{align}
for some state
\begin{align} 
\omega_{UXBC} &= \sum_{x \in \mathcal{X} } p_X(x) \rho^x_{U}\otimes |x \rangle \langle x| \otimes \rho^x_{BC}.
\label{r2}
\end{align}

%Yard {\emph{et al.}} proved (see Theorem 2 of \cite{YHD11}) that the capacity region of the c-q DBC $\mathscr{W} \equiv \mathscr{W}^{X \to BC}$ is contained in $\cC_\mathscr{W}$. Whether the two regions coincide remains an open problem. 

Let Alice's private messages to Bob and Charlie be labelled by the elements of the index sets $\cM := \{1,2,\ldots , |\cM|\}$ and $\cK := \{1,2,\ldots , |\cK|\}$, respectively. For any $(R_B, R_C) \in {\mathbb{R}}^2_{\geq 0}$, and any $n \in {\mathbb{N}}$, an $(n, R_B, R_C)$ code is given by the pair $\left(\mathcal{E}_n, \mathcal{D}_n\right)$,
where $\cE_n$ is the encoding map
\begin{align}
\begin{split}
\cE_{n} : \cM \times \cK &\to \cX^n;\\
 (m,k) &\mapsto x^n(m,k) = (x_1(m,k),x_2(m,k) \ldots,x_n(m,k)),
\end{split}
\end{align}
with $|\cM| = \lfloor 2^{nR_B}\rfloor$ and $|\cK| = \lfloor 2^{nR_C}\rfloor$.
Henceforth, for simplicity we assume that $2^{nR_B}$ and $2^{nR_C}$ are integers.
The decoding map $\cD_n$ consists of two POVMs: 
$\Pi_{B^n}:=\{\Pi_{B^n}^m\}_{m \in \cM}$, and $\Pi_{C^n}:=\{\Pi_{C^n}^k\}_{k \in \cK}$ where $\Pi_{B^n}^m \in {\cP(\cH_B^{\otimes n})}$ and $\Pi_{C^n}^k \in {\cP(\cH_C^{\otimes n})}$  for any $(m,k) \in \cM \times \cK$ and $\sum_{ m \in \cM } \Pi_{B^n}^m = \mathds{1}_{B^n}$ and $\sum_{ k \in \cK } \Pi_{C^n}^k = \mathds{1}_{C^n}$.

If the classical sequence $x^n(m,k)$ (which is the codeword corresponding to the message $(m,k)$) is sent through $n$ successive uses of the memoryless c-q DBC $\mathscr{W}^{X\to BC}$, the output is the product state
 \begin{align}
\rho_{B^n C^n}^{x^n(m,k)} &= \rho_{BC}^{x_1(m,k)}\otimes \rho_{BC}^{x_2(m,k)}\ldots \otimes \rho_{BC}^{x_n(m,k)} \in \mathcal{D}\left( \mathcal{H}_{BC}^{\otimes n} \right),
\end{align}
where $\mathcal{H}_{BC} \equiv \mathcal{H}_B \otimes \mathcal{H}_C$.
The probability that an error is incurred in sending the message $(m,k)$ is then given by
$$ 1-\Tr\left[ \rho_{B^n C^n}^{x^n(m,k)} \left(  \Pi_{B^n}^{m} \otimes \Pi_{C^n}^{k} \right) \right] .$$
The {\emph{maximal probability of error}} for the code $\left(\mathcal{E}_n, \mathcal{D}_n\right)$ is then defined as follows:
\begin{align}
p_{\max}\left(\mathcal{E}_n, \mathcal{D}_n\right) &:= \max_{(m,k) \in \cM \times \cK} 
\left( 1- \Tr\left[ \rho_{B^n C^n}^{x^n(m,k)} (\Pi_{B^n}^{m} \otimes \Pi_{C^n}^{k} )\right]\right).
\end{align}
and the {\emph{average probability of error}} for the code $\left(\mathcal{E}_n, \mathcal{D}_n\right)$ is defined as
\begin{align} \label{eq:avg}
p_{\text{avg}}\left(\mathcal{E}_n, \mathcal{D}_n\right) &:= \frac{1}{|\mathcal{M}||\mathcal{K}| } \sum_{(m,k) \in \cM \times \cK} 
\left( 1- \Tr\left[ \rho_{B^n C^n}^{x^n(m,k)} (\Pi_{B^n}^{m} \otimes \Pi_{C^n}^{k} )\right]\right).
\end{align}
%\begin{defn}[Codes]\label{def1}
For any $\eps \in [0,1]$, an $(n, R_B, R_C)$ code $\left(\mathcal{E}_n, \mathcal{D}_n\right)$ is said be an $(n, R_B, R_C, \eps)$ code if $p_{\max}\left(\mathcal{E}_n, \mathcal{D}_n\right) \leq \eps$. % and $2^{n R_B}, 2^{nR_C} \in \mathbb{N}$.
%\end{defn}
For a fixed $\eps \in [0,1)$, a rate pair $(R_B, R_C)$ is said to be \emph{$\eps$-achievable} (under the maximal error criterion) if there exists a sequence of $(n, R_B, R_C, \eps_n)$ codes such that $\eps_n \to \eps$ as $n \to \infty$.

A rate pair $(R_B, R_C)$ is {\emph{achievable}} if $\eps=0$. It is clear that any rate pair which is  achievable
is also $\eps$-achievable for all $\eps \in (0,1)$. 
For any $\eps \in [0, 1)$, let us define the \emph{$\eps$-achievable rate region} and the \emph{$\eps$-capacity region} of $\mathscr{W}$ as follows:
\begin{align} 
\begin{split} \label{eq:C_W_eps}
\cR_\mathscr{W}(\eps) &:= \left\{ (R_B, R_C) \in \mathbb{R}_{\geq 0}^2 \,:\text{ $(R_B,R_C)$ is $\eps$-achievable} \right\}; \\
\cC_\mathscr{W}(\eps) &:= \overline{\cR_\mathscr{W}(\eps)},
\end{split}
\end{align}
where $\overline{ \mathcal{R}_\mathscr{W} } (\eps)$ denotes the closure of the set ${ \mathcal{R}_\mathscr{W}  }(\eps)$.
The \emph{capacity region} of $\mathscr{W}$ is then $\cC_\mathscr{W}(0)$.  It is clear that
\begin{align}
\cR_\mathscr{W}(0) &= \bigcap_{\eps \in (0,1)} \cR_\mathscr{W}(\eps); \quad \cC_\mathscr{W}(0) = \bigcap_{\eps \in (0,1)} \cC_\mathscr{W}(\eps).
\end{align}
Similarly, one can introduce the $\ep$-capacity region under the average error criterion, which we denote as $\cC_{\mathscr{W},\operatorname{avg}}(\eps)$.
Since the average probability of error of a code is always less than or equal to the associated maximal probability of error, the inclusion $\cC_{\mathscr{W}}(\eps) \subseteq \cC_{\mathscr{W}, \text{avg}}(\eps)$ holds for all $\eps \in[0,1)$.
Furthermore, a standard \emph{codebook expurgation method} \cite{Wil90}, \cite[Problem~8.11]{GK11} shows that it is possible to construct a sequence of $(n, R_B - \frac2n \log n, R_C - \frac2n \log n)$ code with maximal probability of error less than $\sqrt{\eps_n}$ if a sequence $(n, R_B, R_C)$ code with average probability of error $\eps_n$ exists such that $\eps_n \to 0$ as $n\to \infty$.
Hence, $$ \cC_{\mathscr{W},\text{ave}}(0) = \cC_{\mathscr{W}}(0).$$
For convenience, we will only focus on the $\eps$-capacity region $\cC_{\mathscr{W}}(\eps)$ under the maximal error criterion throughout this paper.

Let us define the following entropic regions 
\begin{align}
\begin{split}
\mathcal{R}_\mathscr{W}^{\text{ent}} &:= \bigcup  \left\{ (R_B, R_C) \in \mathbb{R}_{\geq 0}^2 \, :  R_B  \leq I(X;B|U)_\omega, R_C  \leq I(U;C)_\omega \right\}; \\
\mathcal{C}_\mathscr{W}^{\text{ent}} &:= \overline{ \mathcal{R}_\mathscr{W} }, 
\end{split} \label{eq:C_W}
\end{align}
where the union is taken over all states $\omega_{UXBC}$ of the form \eqref{r2}. 
Yard {\emph{et al.}}~showed that \cite[Theorem 2]{YHD11} 
\begin{align} \label{eq:Yard}
\cC_\mathscr{W}(0) \subseteq \cC_\mathscr{W}^{\text{ent}}.
\end{align}
%Finally, for any $\eps \in (0, 1)$, an $(n, R_B, R_C)$ code $\cC^{(n)}$ is said be an $(n, R_B, R_C, \eps)$ code if $p_{\max}(\cC^{(n)}) \leq \eps$.
We now have all the definitions needed to state our main results.

\section{Main Results}\label{main}
For the memoryless c-q DBC $\mathscr{W}\equiv \mathscr{W}^{X\to BC}$ defined above, the results that we obtain can be briefly summarized as follows. For  more detailed and precise statements of these results, see the relevant corollaries and theorems given in Section~\ref{sec:sc}.
\smallskip

\noindent
\emph{\bf{Result 1 [Strong converse property, Corollary~\ref{coro:epsilon}]}} For any $\eps\in (0,1)$
\begin{align} \cC_\mathscr{W} (\eps) &\subseteq \cC_\mathscr{W}^{\text{ent}},
\end{align} 
where $\cC_\mathscr{W} (\eps)$ denotes its $\eps$-capacity region (defined in \eqref{eq:C_W_eps}),
whereas $ \cC_\mathscr{W}^{\text{ent}}$ is the region characterized by entropic quantities given in \eqref{eq:C_W}.

\medskip
 
This result implies that for any sequence of $(n, R_B, R_C)$ codes $\left(\mathcal{E}_n, \mathcal{D}_n\right)$, for which the rate pair $(R_B, R_C)$ lies outside the region $ \cC_\mathscr{W}^{\text{ent}}$ of the c-q DBC $\mathscr{W}$, 
\begin{align}
p_{\max} \left(\mathcal{E}_n, \mathcal{D}_n\right) &\to 1 \quad {\hbox{as}} \; n \to \infty.\label{conv}
\end{align}
This establishes the strong converse property of $\cC_\mathscr{W}^{\operatorname{ent}}$.

\medskip
\noindent
\emph{\bf{Result 2 [Exponential convergence, Corollary~\ref{coro:exponential}]}} The convergence in \eqref{conv} is exponential in $n$:
\begin{align}
p_{\max} \left(\mathcal{E}_n, \mathcal{D}_n\right) & \geq 1 - \e^{-nf}, \quad \forall n \in \mathbb{N},
\end{align}
where  $f = ( \sqrt{ (\sqrt{d_B} + \sqrt{d_C})^2 + \eta } - \sqrt{d_B} - \sqrt{d_C} )^2 > 0$
for some $\eta >0$, which depends only on how far the rate pair $(R_B, R_C)$ is from the region $\mathcal{C}_\mathscr{W}^{\operatorname{ent}}$.
\medskip

{\bf{Proof Ingredients:}} We prove the above results by first strengthening \eqref{eq:Yard} \cite[Theorem 2]{YHD11} by establishing second order (in $n$) upper bounds on $\eps$-achievable rate pairs $(R_B,R_C)$; see Theorem~\ref{theo:SC_DBC} of Section~\ref{sec:sc}. The key ingredient of the proof of this result is a \emph{second-order Fano-type inequality for c-q channel coding} (Theorem~\ref{theo:Fano}), which we consider to be a result of independent interest. The latter in turn employs the powerful analytical tool described in Theorem \ref{lemm:RHC}, namely the tensorization property of the {\emph{quantum reverse hypercontractivity}} for the quantum depolarizing semigroup \cite{BDR18, LHV18, CDR19a}.

\section{Second-Order Fano-type inequality} \label{sec:sc}
In this section we give precise statements of our results (which were summarized in Section \ref{main}) and their proofs.
In Theorem~\ref{theo:Fano} below, we establish a second-order Fano-type inequality for standard classical-quantum (c-q) channel\footnote{That is, for a point-to-point channel with a single user and a single receiver, as opposed to a broadcast c-q channel.} coding. This theorem is a key ingredient in the proof of the second-order strong converse bound for the c-q degraded broadcast channel $\mathscr{W}^{X \to BC}$ (Theorem~\ref{theo:SC_DBC}), which leads to our main results (Corollaries \ref{coro:epsilon} and \ref{coro:exponential}).

% We remark that Theorem~\ref{theo:Fano} allows stochastic encoders which uses finite random bits. Hence, it is a strengthening of the one by Beigi, Datta, Rouz\'e \cite[Theorem 32]{BDR18}.
\begin{theo}
	[Second-order Fano-type inequality for c-q channel coding] \label{theo:Fano}
	Let $\mathcal{M},\mathcal{K}$, and $\mathcal{X}$ denote arbitrary finite sets, and let the map $x\mapsto \rho_B^x \in \mathcal{D}(\mathcal{H}_B)$ denote a c-q channel for all $x\in\mathcal{X}$.
	Consider the following encoding map: $\forall$ $m\in\mathcal{M}$,
  $$\cE_n: m \mapsto x^n(m,k) \in \mathcal{X}^n\quad {\hbox{with probability}}\,\, q(k),$$ 
where $\{q(k)\}_{k \in \cK}$ denotes an arbitrary probability distribution on $\mathcal{K}$.
	Further, let $\cD_n$ denote a decoding map given by a POVM $\{\Pi^m_{B_n}\}_{m \in \mathcal{M}}$. If $(\cE_n,\cD_n)$
are such that for some $\eps \in (0,1)$,
	\begin{align} \label{eq:Fano_condition}
	\prod_{(m,k) \in \mathcal{M}\times \mathcal{K}}  \left(  \Tr\left[ \rho_{B^n}^{x^n(m,k)} \Pi_{B^n}^m \right] \right)^{\frac{1}{|\mathcal{M}|}q(k)} \geq 1-\eps,
	\end{align}
	then
	\begin{align}\label{ineq1}
	\log |\mathcal{M}| \leq I(M;B^n)_\rho + 2 \sqrt{ n d_B  \log \frac{1}{1-\eps} } + \log\frac{1}{1-\eps}.
	\end{align}
	In the above, the mutual information  is taken with respect to a state $\rho_{MB^n}$ which is the reduced state of
$$\rho_{MKB^n} := \frac{1}{|\mathcal{M}|} \sum_{(m,k)\in\mathcal{M}\times\mathcal{K}} q(k) |m\rangle \langle m| \otimes |k\rangle \langle k| \otimes \rho_{B^n}^{x^n(m,k)},$$  
with $\rho_{B^n}^{x^n(m,k)}= \bigotimes_{i=1}^n\rho_B^{x_i(m,k)}$ being an $n$-fold product state on $\mathcal{H}_B^{\otimes n}$.
\end{theo}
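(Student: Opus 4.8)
The right-hand side of \eqref{ineq1} is a smoothed Holevo bound: as $\eps\to 0$ it degenerates to $\log|\cM|\le I(M;B^n)_\rho$, and the $\sqrt n$ correction is the price of a nonzero (geometric-mean) error. So the plan is to run a Rényi-type one-shot converse with a free parameter $\alpha=1+s>1$ and to control, \emph{uniformly in $n$}, the gap between the resulting Rényi quantity and the von Neumann mutual information; the latter is precisely where the tensorization of quantum reverse hypercontractivity is needed. \emph{Step 1 (rewriting the hypothesis).} Taking $-\log$ in \eqref{eq:Fano_condition} and abbreviating $s_{m,k}:=\Tr[\rho_{B^n}^{x^n(m,k)}\Pi_{B^n}^m]$ gives $\sum_{(m,k)\in\cM\times\cK}\frac{q(k)}{|\cM|}\log\frac{1}{s_{m,k}}\le\log\frac{1}{1-\eps}$; writing $\eps_m:=1-\prod_k s_{m,k}^{q(k)}$ for the geometric-mean error of message $m$, this reads $\frac1{|\cM|}\sum_m\log\frac1{1-\eps_m}\le\log\frac1{1-\eps}$. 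I also record $\rho_{B^n}^{(m)}:=\sum_k q(k)\rho_{B^n}^{x^n(m,k)}$, $\rho_{B^n}=\frac1{|\cM|}\sum_m\rho_{B^n}^{(m)}$, and $I(M;B^n)_\rho=\frac1{|\cM|}\sum_m D(\rho_{B^n}^{(m)}\|\rho_{B^n})$ (after a routine regularization making $\rho_{B^n}$ full rank).

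\emph{Step 2 (Rényi one-shot converse).} Next I would invoke, or re-derive via a hypothesis-testing/meta-converse argument together with the elementary bound $D_H^{\delta}(\cdot\|\cdot)\le D_\alpha(\cdot\|\cdot)+\tfrac{\alpha}{\alpha-1}\log\tfrac1{1-\delta}$ for $\alpha>1$, a one-shot converse of the form $\log|\cM|\le J_\alpha(M;B^n)+\tfrac{\alpha}{\alpha-1}\log\tfrac1{1-\eps}$, where $J_\alpha(M;B^n)$ is an order-$\alpha$ Petz–Rényi mutual-information-type quantity built from $\{\rho_{B^n}^{(m)}\}$ and $\rho_{B^n}$ (a soft-max over $m$ of $D_\alpha(\rho_{B^n}^{(m)}\|\rho_{B^n})$), satisfying $J_1=I(M;B^n)_\rho$ and monotone in $\alpha$. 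The per-message errors $\eps_m$ are what keep $\log\frac1{1-\eps}$ under control after averaging as in Step 1; and since $\tfrac{\alpha}{\alpha-1}=1+\tfrac1s$ for $\alpha=1+s$, this single term already supplies both the additive $\log\frac1{1-\eps}$ in \eqref{ineq1} and a term $\tfrac1s\log\frac1{1-\eps}$.

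\emph{Step 3 (the crux).} It remains to prove $J_{1+s}(M;B^n)\le I(M;B^n)_\rho+s\,n\,d_B$ for small $s>0$ — the only place the tensor structure $\rho_{B^n}^{x^n(m,k)}=\bigotimes_i\rho_B^{x_i(m,k)}$ enters. I would pass from $D_{1+s}(\rho_{B^n}^{(m)}\|\rho_{B^n})$ to the weighted $L_p$-norms of \eqref{eq:wLp} taken with respect to the \emph{product} states $\rho_{B^n}^{x^n(m,k)}$, using the Araki–Lieb–Thirring inequality (Lemma~\ref{lemm:ALT}) to trade non-commuting factors for a single power (e.g. to bound $\Tr[\rho_{B^n}^{x^n}(\Pi^m)^{r}]$ by $\|\Pi^m\|_{r,\rho_{B^n}^{x^n}}^{r}$), and then apply Theorem~\ref{lemm:RHC}: because $\alpha_1(\cK_{x^n})\ge\tfrac14$ \emph{independently of $n$}, the exponent relation $t\ge\log\frac{\mathsf p-1}{\mathsf q-1}$ carries no $n$-dependence, so with $\mathsf p\le\mathsf q<1$ tied to $s$ one may smooth the decoding operators by $\Phi_{t,x^n}$ for the corresponding short time $t$ without an $n$-dependent penalty in $t$. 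The remaining $n$-dependence sits only in the eigenvalue floor of the smoothed operators, and estimating that floor with a dimension-dependent norm inequality on the single system $B$ produces exactly the linear factor $n\,d_B$; averaging over $m$ and using concavity of $\log$ gives the claim.

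\emph{Step 4 and the main obstacle.} Combining Steps 2–3 yields $\log|\cM|\le I(M;B^n)_\rho+s\,n\,d_B+\tfrac1s\log\tfrac1{1-\eps}+\log\tfrac1{1-\eps}$ for all $s>0$, and taking $s=\sqrt{\log\tfrac1{1-\eps}\big/(n d_B)}$ minimises the middle two terms at $2\sqrt{n d_B\log\tfrac1{1-\eps}}$, which is \eqref{ineq1}; the regime where this optimal $s$ falls outside the admissible range (i.e. $\eps$ extremely close to $1$) is handled separately and is in any case non-vacuous there. The substantive difficulty is Step 3: extracting from quantum reverse hypercontractivity a bound on $J_{1+s}-I$ that is \emph{linear} in $n$ with an $n$-independent coefficient (so that $\sqrt n$, and not $n$, survives the optimization), while simultaneously coping with the non-commutativity of $\rho_{B^n}^{x^n(m,k)}$, $\rho_{B^n}$ and $\Pi_{B^n}^m$ — the role of Lemma~\ref{lemm:ALT} — and with the averaging over the dummy index $k$, which forbids splitting the relative entropies and forces one to work with the mixtures $\rho_{B^n}^{(m)}$ throughout (or to keep $k$ explicit and collapse it only at the end).
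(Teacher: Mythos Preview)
Your Step~4 optimization is exactly how the paper finishes, and the overall architecture---a one-parameter family of bounds collapsing to the $\sqrt n$ correction---is right. But the decomposition into Steps~2 and~3 creates two genuine gaps. First, the meta-converse $\log|\cM|\le J_\alpha+\tfrac{\alpha}{\alpha-1}\log\tfrac1{1-\eps}$ is standard under average or maximal error, not under the geometric-mean criterion~\eqref{eq:Fano_condition}; you have not explained how averaging per-message bounds reconstructs a single $\log\tfrac1{1-\eps}$. Second, and more seriously, you work at $\alpha=1+s>1$ and want $J_{1+s}\le I+s\,n\,d_B$ via reverse hypercontractivity; but Theorem~\ref{lemm:RHC} is an inequality $\|\Phi_t(G)\|_{\mathsf p}\ge\|G\|_{\mathsf q}$ for $\mathsf p\le\mathsf q<1$, furnishing \emph{lower} bounds on weighted norms and hence lower bounds on $D_\alpha$ for $\alpha<1$---the wrong direction for an upper bound on $D_{1+s}$. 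Your phrase ``$\mathsf p\le\mathsf q<1$ tied to $s$'' leaves the link unspecified, and since neither $\rho_{B^n}^m$ nor $\rho_{B^n}$ is a tensor product there is no fallback Lipschitz estimate $D_{1+s}-D\le s\cdot n\cdot c$ either (the relative-entropy variance is not controlled by $d_B$ when the second argument has small eigenvalues).

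The paper avoids both issues by \emph{not} separating the argument into ``meta-converse'' plus ``R\'enyi-to-von-Neumann''. It works at order $\alpha=1-p<1$ and uses the variational formula $\QP_p(\rho_{B^n}\|\rho_{B^n}^m)=\inf_{G>0}(\Tr[\rho_{B^n}G])^p(\Tr[\rho_{B^n}^m G^{\hat p}])^{1-p}$ with the explicit test function $G_n=\Psi_t^{\otimes n}(\Pi_{B^n}^m)$, the POVM element smoothed by the semigroup with invariant state $\mathbb{I}$. Both halves of the final bound come from this single formula. The factor $\Tr[\rho_{B^n}^m G_n^{\hat p}]$ is linear in the $k$-mixture, exposing the product states $\rho_{B^n}^{x^n(m,k)}$; Lemma~\ref{lemm:ALT} passes to $\|G_n\|_{\hat p,\rho^{x^n(m,k)}}^{\hat p}$; the comparison $\Psi_t^{\otimes n}(\Pi^m)\ge\Phi_{t,x^n(m,k)}(\Pi^m)$ switches to the product-invariant semigroup; and Theorem~\ref{lemm:RHC} with $\mathsf p=\hat p\in(-1,0)$ reduces this to the success probabilities $\Tr[\rho^{x^n(m,k)}\Pi^m]$. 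The other factor, after averaging over $m$ and using $\Psi_t^{\otimes n}(\mathbb{I})\le e^{d_Bnt}\mathbb{I}$, gives $\log|\cM|-d_Bnt$. Crucially, the limit $p\to0$ is taken \emph{before} any optimization, so the free parameter is the semigroup time $t$, not a R\'enyi order; the geometric-mean hypothesis then enters exactly once, as $\sum_{m,k}\tfrac{q(k)}{|\cM|}\log\Tr[\rho^{x^n(m,k)}\Pi^m]\ge\log(1-\eps)$, yielding $\log|\cM|\le I(M;B^n)+d_Bnt+(1+\tfrac1t)\log\tfrac1{1-\eps}$, after which your Step~4 applies with $t$ in place of~$s$.
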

\smallskip

\begin{remark}
	We refer to it as a Fano-type inequality because of the following. The usual (classical) Fano inequality \cite{Fan61} can be cast in the following form:
	Let $M, \widehat{M}$ denote two random variables taking values in the same finite set $\mathcal{M}$. If $\Pr(M\neq \widehat{M}) = \eps \in [0,1)$.
	Then, the (classical) Fano inequality \cite{Fan61} states that
	\begin{align}\label{Fanoclass}
	H(M) \leq I(M;\widehat{M}) + h(\eps) + \eps\log \left(|\mathcal{M}| - 1 \right),
	\end{align}
	where $h(\eps) := -\eps \log \eps - (1-\eps) \log (1-\eps)$ is the binary entropy function.
	
	In Theorem~\ref{theo:Fano}, the random variable $M$ is equiprobable and hence $H(M) = \log |\mathcal{M}|$.
	Considering $\widehat{M}$ to be the random variable denoting the outcome of the POVM $\{\Pi^m_{B_n}\}_{m \in \mathcal{M}}$ on the state $\rho_{B^n}^{x^n(m,k)}$, and using the data-processing inequality for the mutual information, one can upper bound the right-hand side of \eqref{Fanoclass} by
	\begin{align}
	\log |\mathcal{M}| &\leq I(M;B^n) + h(\eps) + \eps\log \left(|\mathcal{M}| - 1 \right)\\
	&\le I(M;B^n) + h(\eps) + \eps \log |\mathcal{M}|
	\end{align}
	which can be rewritten as
	\begin{align}  \label{eq:usual_Fano}
	\log |\mathcal{M}| \leq \frac{1}{1-\eps} I(M;B^n)_\rho + f(\eps),
	\end{align}
	where $f(\eps) = \frac{h(\eps) }{1-\eps}$.
	The similarity between \eqref{eq:usual_Fano} and \eqref{ineq1} lead Liu \emph{et al.}~\cite{LHV18} to refer to the latter as a Fano-type inequality in the classical case.
	The phrase `second-order' is used because the right-hand side of \eqref{ineq1} explicitly gives a term of order $\sqrt{n}$.
\end{remark}

\begin{remark} 
The above theorem is a generalization of Theorem 32 of \cite{BDR18}, in which an inequality similar to \eqref{ineq1} was obtained\footnote{A classical analogue of Theorem 32 of \cite{BDR18} was earlier proved in~\cite{LHV18}.}. The main difference between the two is that 
in \cite{BDR18}, the mutual information, arising in the inequality, was evaluated with respect to a state which is a direct sum of tensor product states. In contrast, in Theorem \ref{theo:Fano}, the mutual
information is with respect to states which have a more general form, namely, they are direct sums of mixtures of tensor product states (i.e.~separable states):
\begin{align}
\rho_{MB^n} =\frac{1}{|\mathcal{M}|} \sum_{m\in\mathcal{M}}  |m\rangle \langle m|  \otimes \rho^m_{B^n}, 
\end{align}
where
	\begin{align} \label{eq:Fano17}
	\rho^m_{B^n} := \sum_{k\in\mathcal{K}} q(k) \rho_{B^n}^{x^n(m,k)}, \quad \forall\, m\in\mathcal{M},
	\end{align}
where $ \rho_{B^n}^{x^n(m,k)}= \bigotimes_{i=1}^n\rho_B^{x_i(m,k)}$.
This generalization is crucial for our proof of the strong converse property of a c-q DBC.
\end{remark}

\begin{remark} \label{remark:error}
	The condition given by the inequality \eqref{eq:Fano_condition} is called the \emph{geometric average error criterion}. It is stronger than the \emph{average error criterion}, 
$$\frac{1}{|\mathcal{M}|} \sum_{(m, k) \in \mathcal{M}\times \mathcal{K} } q(k) \Tr[ \rho_{B^n}^{x^n(m,k)} \Pi_{B^n}^m ] \geq 1-\eps,$$ in the classical Fano inequality \cite{Fan61}, but is weaker than the \emph{maximal error criterion}, 
$$ \min_{(m, k) \in \mathcal{M}\times \mathcal{K} } \Tr[ \rho_{B^n}^{x^n(m,k)} \Pi_{B^n}^m ] \geq 1-\eps.$$
	Since the Fano-type inequality is a tool to prove converse results in network information theory, one might wonder if it still holds under a weaker error criterion.
	 In the classical case, Liu \emph{et al.}~showed that an analogous  second-order Fano-type inequality does not hold if the geometric average error criterion is replaced by the average error criterion~\cite{LCV17}, \cite[Remark 3.3]{LHV18}.
	However, by a standard technique known as the \emph{codebook expurgation} (see e.g.~\cite[Problem 8.11]{GK11}, \cite{Wil90}, \cite{CK11}, \cite{Ahl78}), which consists of discarding codewords corresponding to large error probabilities, one might still be able to show a second-order converse bound under the average error criterion in certain network information-theoretic tasks.
\end{remark}

\begin{proof}
	[Proof of Theorem~\ref{theo:Fano}]
	Before starting the proof, we introduce necessary definitions that will be used later. Consider the QMS $\left(\Phi_{t,x^n(m,k)}\right)_{t\geq 0}$, where for all $(m, k) \in \cM \times \cK$
	\begin{align}
	&\Phi_{t,x^n(m,k)} := \Phi_{t,x_1(m,k)} \otimes \cdots \otimes \Phi_{t,x_n(m,k)}, \label{eq:semigroup}
\end{align}
and $\forall$ $x(m,k) \in \cX$, $\Phi_{t,x(m,k)}$ denotes the superoperator defining the GQDS \eqref{eq-gqds}: 
\begin{align}
\Phi_{t,x(m,k)}( T) &:= \e^{-t} T + (1-\e^{-t}) \Tr\left[ \rho_B^{x(m,k)} T \right] \mathbb{I}_B, \quad \forall T \in\mathcal{B}(\mathcal{H}_B),\, t>0
\end{align}
Further, we define the following superoperator
\begin{align}
&\Psi_{t}(T) := \e^{-t} T + (1-\e^{-t}) \Tr\left[ T \right] \mathbb{I}_B, \quad \forall\, T \in\mathcal{B}(\mathcal{H}_B).
 \end{align}
	For any $\rho, \sigma \in\mathcal{D}(\mathcal{H}_B)$, the \emph{projectively measured R\'enyi relative entropy} is defined as \cite{Don86, HP91, Pet86b}:
	\begin{align}
	\DP_\alpha(\rho\|\sigma) &:= 
	%	\begin{dcases}
	\frac{1}{\alpha-1} \log \QP_\alpha(\rho\|\sigma), \quad \forall \alpha \in (0,1),  \\
	%	\sup_{ \{P_i\}_{i=1}^{d }}  \left\{  \sum_{i=1}^{d} \Tr [ P_i \rho ] \log \frac{ \Tr [ P_i \rho ] }{ \Tr [ P_i \sigma ] } \right\}, & \alpha = 1 \\
	%	\end{dcases}; \\
	{\hbox{with}}\quad	\QP_\alpha(\rho\|\sigma) &:= \inf_{ \{P_i\}_{i=1}^{d_B  }} \left\{ \sum_{i=1}^{d_B }  \left(\Tr [ P_i \rho ] \right)^\alpha \left(\Tr [ P_i \sigma ] \right)^{1-\alpha}	\right\}, \quad \forall \alpha \in (0,1), \label{eq:QP}
	\end{align}
	where the optimization is over all sets of mutually orthogonal projectors $ \{P_i\}_{i=1}^{d_B } $ on $\mathcal{H}_B$. % and we shorthand $d \equiv d_{\mathcal{H}_B}$ the dimension of $\mathcal{H}_B$.
	
	Let $m \in \mathcal{M}$, $t>0$, $p\in(0,\sfrac12)$ and let $\hat{p} = (1-\sfrac1p)^{-1} \in (-1,0)$ be its H\"older conjugate. We commence the proof by invoking a variational formula for $\QP_p$ \cite[Lemma 3]{BFT17}:
	\begin{align}
	\QP_p( \rho_{B^n} \| \rho_{B^n}^m ) &= \inf_{G_n>0} \left\{ \left( \Tr\left[ \rho_{B^n} G_n \right] \right)^p \left( \Tr\left[ \rho_{B^n}^m G_n^{ \hat p}  \right]\right)^{1-p} \right\} \\
	&\leq  \left( \Tr\left[ \rho_{B^n} \Psi_t^{\otimes n}(\Pi_{B^n}^m) \right] \right)^p \left( \Tr\left[ \rho_{B^n}^m (\Psi_t^{\otimes n}(\Pi_{B^n}^m))^{ \hat p}  \right]\right)^{1-p} \\
	&=  \left( \Tr\left[ \rho_{B^n} \Psi_t^{\otimes n}(\Pi_{B^n}^m) \right] \right)^p \left(  \sum_{k\in\mathcal{K}} q(k) \Tr\left[ \rho_{B^n}^{x^n(m,k)} (\Psi_t^{\otimes n}(\Pi_{B^n}^m))^{ \hat p}  \right]\right)^{1-p}. \label{eq:Fano1}
	\end{align}	
	In the above, we remark that $\Psi_t^{\otimes n}(\Pi_{B^n}^m) >0$ for all $t>0$ due to the definition of $\Psi_t$ and the condition \eqref{eq:Fano_condition}.
	
	Applying the Araki-Lieb-Thirring inequality, Lemma~\ref{lemm:ALT}, with $r= -\hat{p} \in(0,1)$, $A = (\Psi_t^{\otimes n}(\Pi_{B^n}^m))^{-1} > 0$, 
	and $B^{r} = \rho_{B^n}^{x^n(m,k)} $ yields
	\begin{align}
	\Tr\left[ \rho_{B^n}^{x^n(m,k)} (\Psi_t^{\otimes n}(\Pi_{B^n}^m))^{\hat{p}} \right] &\leq \Tr\left[ \left( \left(\rho_{B^n}^{x^n(m,k)} \right)^{- \frac{1}{2\hat{p}} } \left( \Psi_t^{\otimes n}(\Pi_{B^n}^m) \right)^{-1} \left(\rho_{B^n}^{x^n(m,k)} \right)^{- \frac{1}{2\hat{p}} } \right)^{-\hat{p}} \right] \\
	&= \left\| \Psi_t^{\otimes n}(\Pi_{B^n}^m) \right\|_{\hat{p}, \rho_{B^n}^{x^n(m,k)} }^{\hat{p}}. \label{eq:Fano2}
	\end{align}
	On the other hand, it is clear from the the definition \eqref{eq:QP} that $\QP_p( \rho_{B^n} \| \rho_{B^n}^m ) = \QP_{1-p} (\rho_{B^n}^m \| \rho_{B^n})$.
	Combining \eqref{eq:Fano1} and \eqref{eq:Fano2}, taking logarithms of both sides of the resulting inequality, and dividing by $p$, yields
	\begin{align}
	\DP_{1-p}  \left(\rho_{B^n}^m \| \rho_{B^n}\right) &\geq \frac{1}{\hat{p}} \log \left(   \sum_{k\in\mathcal{K}} q(k) \left\| \Psi_{t}^{\otimes n}( \Pi_{B^n}^m ) \right\|_{\hat{p}, \rho_{B^n}^{x^n(m,k)} }^{\hat{p}} \right) - \log \Tr\left[ \rho_{B^n} \Psi_{t}^{\otimes n}( \Pi_{B^n}^m ) \right].  \label{eq:Fano16}
	\end{align}
	Further, the left-hand side of \eqref{eq:Fano16} can be upper bounded using the data processing inequality for the \emph{relative R\'enyi entropy} with respect to projective measurements, i.e.~for $p\in(0,\sfrac12)$,
	\begin{align}
	\DP_{1-p}  \left(\rho_{B^n}^m \| \rho_{B^n}\right) \leq D_{1-p}  \left(\rho_{B^n}^m \| \rho_{B^n}\right) := \frac{1}{-p} \log \Tr\left[ (\rho_{B^n}^m)^{1-p} (\rho_{B^n})^p \right].
	\end{align}
	Averaging over all $m \in \mathcal{M}$, we have
	\begin{align}
	\begin{split}
	\frac{1}{|\mathcal{M}|} \sum_{m\in\mathcal{M}} D_{1-p}  \left(\rho_{B^n}^m \| \rho_{B^n}\right) &\geq \frac{1}{ |\mathcal{M}|} 	\sum_{m\in\mathcal{M}} \frac{1}{\hat{p}}\log \left( \sum_{k\in\mathcal{K}} q(k)\left\| \Psi_{t}^{\otimes n}( \Pi_{B^n}^m ) \right\|_{\hat{p}, \rho_{B^n}^{x^n(m,k)} }^{\hat{p}} \right)  \\
	&\quad -  \frac{1}{|\mathcal{M}|}	\sum_{m\in\mathcal{M}}\log \Tr\left[ \rho_{B^n} \Psi_{t}^{\otimes n}( \Pi_{B^n}^m ) \right]. 
	\end{split} \label{eq:Fano3}
	\end{align}
	
	In the following, we lower bound the right-hand side of \eqref{eq:Fano3}.
	The superoperator $(\Psi_{t}^{\otimes n} - \Phi_{t,x^n(m,k)})$, where $\Phi_{t,x^n(m,k)}$ is the superoperator defined through \eqref{eq:semigroup},  is positivity-preserving for every $(m,k) \in \mathcal{M}\times \mathcal{K}$, since $\rho_B^x \leq \mathbb{I}_B$ for all $x\in\mathcal{X}$. (This can be proved by induction in $n$, as in the proof of \cite[Theorem 29]{BDR18}).
	Further, the non-commutative weighted $L_\mathsf{p}$-norm $\|\cdot \|_{\mathsf{p}, \rho}$ is monotone non-decreasing in its argument for every $\mathsf{p} \in \mathbb{R}\backslash \{0\}$ (which can be immediately verified from the definition \eqref{eq:wLp} by using Weyl's Monotonicity Theorem \cite[Corollary III.2.3]{Bha97}).
	Hence, for every $m\in\mathcal{M}$,
	\begin{align}
	\sum_{k\in\mathcal{K}} q(k) \left\| \Psi_{t}^{\otimes n}( \Pi_{B^n}^m ) \right\|_{\hat{p}, \rho_{B^n}^{x^n(m,k)} }^{\hat{p}} \leq
	\sum_{k\in\mathcal{K}} q(k)\left\| \Phi_{t,x^n(m,k)}^{\otimes n}( \Pi_{B^n}^m ) \right\|_{\hat{p}, \rho_{B^n}^{x^n(m,k)} }^{\hat{p}}. \label{eq:Fano4}
	\end{align}
	Then, we employ the Reverse Hypercontractivity, Lemma~\ref{lemm:RHC}, on the right-hand side of  ~\eqref{eq:Fano4} with $\mathsf{p} = \hat{p} \in (-1,0)$ and  any $\mathsf{q} = q \in (0,1)$ satisfying $t = \log \frac{\hat{p}-1}{q-1}$ to obtain %, for every $(m,k) \in \mathcal{M}\times \mathcal{K}$,
	\begin{align}
	\left\| \Phi_{t,x^n(m,k)}(\Pi_{B^n}^m) \right\|_{\hat{p}, \rho_{B^n}^{x^n(m,k)} }
	&\geq \left\| \Pi_{B^n}^m \right\|_{q, \rho_{B^n}^{x^n(m,k)} } \\
	&= \left( \Tr\left[ \left( (\rho_{B^n}^{x^n(m,k)})^{\frac{1}{2q}} \Pi_{B^n}^m (\rho_{B^n}^{x^n(m,k)})^{\frac{1}{2q}} \right)^q  \right] \right)^{\frac{1}{q}} \\
	&\geq \left( \Tr\left[ \rho_{B^n}^{x^n(m,k)} (\Pi_{B^n}^m)^q  \right] \right)^{\frac{1}{q}} \label{eq:Fano5}\\		
	&\geq \left( \Tr\left[  \rho_{B^n}^{x^n(m,k)} \Pi_{B^n}^m    \right] \right)^{\frac{1}{q}}. \label{eq:Fano6}
	%	&\geq \left(1-\eps \right)^{\frac{1}{q}}. \label{eq:Fano7}
	\end{align}
	Here, we used the Araki-Lieb-Thirring inequality, Lemma~\ref{lemm:ALT}, with $A = \Pi_{B^n}^m$, $B = (\rho_{B^n}^{x^n(m,k)})^{\frac1q}$, and $r = q \in(0,1)$
	to obtain the inequality~\eqref{eq:Fano5}. The inequality \eqref{eq:Fano6} holds because $0\leq \Pi_{B^n}^m \leq \mathbb{I}_{B^n}$, so that $(\Pi_{B^n}^m)^q \geq \Pi_{B^n}^m$ for $q\in(0,1)$. 
	%	The last inequality \eqref{eq:Fano7} is due to the assumption in  ~\eqref{eq:Fano_condition}.	
	From \eqref{eq:Fano4} and \eqref{eq:Fano6}, the first term on the right-hand side of  ~\eqref{eq:Fano3} is hence lower bounded by
	\begin{align}
	\frac{1}{ |\mathcal{M}|} 	\sum_{m\in\mathcal{M}} \frac{1}{\hat{p}}\log \left(   \sum_{k\in\mathcal{K}} q(k) \left( \Tr\left[  \rho_{B^n}^{x^n(m,k)} \Pi_{B^n}^m    \right] \right)^{\frac{\hat{p}}{q}} \right). \label{eq:Fano9} 
	%	\frac{1}{q} \log (1-\eps). \label{eq:Fano9} 
	\end{align}
	
	Next, we lower bound the second term on the right-hand side of  ~\eqref{eq:Fano3}.
	The concavity of the logarithm function implies that
	\begin{align}
	- \frac{1}{|\mathcal{M}|} 	\sum_{m\in\mathcal{M}} \log \Tr\left[ \rho_{B^n} \Psi_{t}^{\otimes n}( \Pi_{B^n}^m ) \right] &\geq - 	 \log \left(  \frac{1}{|\mathcal{M}|} \sum_{m\in\mathcal{M}} \Tr\left[ \rho_{B^n} \Psi_{t}^{\otimes n}( \Pi_{B^n}^m ) \right] \right) \\
	&= - \log \left( \frac{1}{|\mathcal{M}|}  \Tr\left[ \rho_{B^n} \Psi_{t}^{\otimes n} (\mathbb{I}_B^{\otimes n} ) \right]  \right),  \\
	&\geq \log |\mathcal{M}| - dnt, \label{eq:Fano10}
	\end{align}
	where the last inequality follows from the fact that\footnote{Note that the convexity of $h(u):= u^d$ for $d\geq 2$ implies that $(h(u)-h(1))/(u-1) \geq h'(1)$ for ever $u\geq1$. Hence, $\e^{dt} - 1 \geq d(\e^t-1)$ for every $t\geq 0$, and $\e^{-t} + d(1-\e^{-t}) \leq \e^{(d-1)t}$.}
	\begin{align}
	\Psi_t^{\otimes n}(\mathbb{I}_B^{\otimes n}) = (\e^{-t} + d (1-\e^{-t}))^n \mathbb{I}_B^{\otimes n} \leq \e^{(d-1)nt} \mathbb{I}_B^{\otimes n} \leq \e^{dnt} \mathbb{I}_B^{\otimes n}.
	\end{align}	
	
	Combining  ~\eqref{eq:Fano3}, \eqref{eq:Fano9}, and \eqref{eq:Fano10} yields
	\begin{align}
	&\frac{1}{|\mathcal{M}|} \sum_{m\in\mathcal{M}} D_{1-p}  \left(\rho_{B^n}^m \| \rho_{B^n}\right) \notag \\ &\geq 	\frac{1}{ |\mathcal{M}|} 	\sum_{m\in\mathcal{M}} \frac{1}{\hat{p}}\log \left(  \sum_{k\in\mathcal{K}} q(k)\left( \Tr\left[  \rho_{B^n}^{x^n(m,k)} \Pi_{B^n}^m    \right] \right)^{\frac{\hat{p}}{q}} \right) 
	+ \log |\mathcal{M}| - dnt. \label{eq:Fano11}
	\end{align}
	Next we take the limits $p\to 0$ and $\hat{p} \to 0$ (which in turn ensures that $q \to 1-\e^{-t}$) on both sides of the above inequality. Then the left-hand side of \eqref{eq:Fano11} becomes
	\begin{align}
	\lim_{p\to 0}
	\frac{1}{|\mathcal{M}|} \sum_{m\in\mathcal{M}} D_{1-p}  \left(\rho_{B^n}^m \| \rho_{B^n}\right) &= \frac{1}{|\mathcal{M}|} \sum_{m\in\mathcal{M}} D  \left(\rho_{B^n}^m \| \rho_{B^n}\right) \label{eq:Fano15} \\
	&= D( \rho_{MB^n}|| \rho_M \otimes \rho_{B^n}) = I(M; B^n)_\rho, \label{eq:Fano12}
	\end{align}
	where the equality \eqref{eq:Fano15} follows from the fact that the the quantum relative R\'enyi entropy $D_{1-p}$ converges to the quantum relative entropy $D$ as $p\to 0 $.%, and the last line \eqref{eq:Fano12} can be verified from the definition of quantum mutual information defined in \eqref{eq:mutual} and \eqref{eq:Fano17}.	
	
	\medskip
	
	On the other hand, the first term on the right-hand side of  ~\eqref{eq:Fano11} becomes
	\begin{align}
	&\lim_{ \hat{p} \to 0} 	\frac{1}{ |\mathcal{M}|} 	\sum_{m\in\mathcal{M}} \frac{1}{\hat{p}}\log \left(  \sum_{k\in\mathcal{K}} q(k) \left( \Tr\left[  \rho_{B^n}^{x^n(m,k)} \Pi_{B^n}^m    \right] \right)^{\frac{\hat{p}}{q}} \right) \notag \\ 
	&= \frac{1}{ |\mathcal{M}|} 	\sum_{(m,k) \in \mathcal{M}\times \mathcal{K}} q(k)\, \frac{1}{1-\e^{-t}} \log \Tr\left[  \rho_{B^n}^{x^n(m,k)} \Pi_{B^n}^m    \right] \label{eq:Fano18}\\
	&\geq \frac{1}{1-\e^{-t}} \log (1-\eps ) \label{eq:Fano19}\\
	&\geq -\left(1+\frac1t\right) \log \frac{1}{1-\eps }. \label{eq:Fano13}
	\end{align}
	In the above, equality \eqref{eq:Fano18} is due to L'H\^{o}spital's rule; inequalities \eqref{eq:Fano19} and \eqref{eq:Fano13} hold because of the assumption given in \eqref{eq:Fano_condition} and the fact that $\frac{1}{1-\e^{-t}} \leq 1+ \frac1t$.
	Finally, \eqref{eq:Fano11}, \eqref{eq:Fano12}, and \eqref{eq:Fano13}, together imply that
	\begin{align}
	\log |\mathcal{M}| \leq I(M;B^n)_\rho + dnt + \left(1+\frac1t\right) \log \frac{1}{1-\eps }. \label{eq:Fano14}
	\end{align}
The above bound~\eqref{eq:Fano14} can be shown to be optimized when 
	\begin{align}\label{tee}
	t = \sqrt{ \frac{ - \log (1-\eps) }{  d n } }, 
	\end{align}
which satisfies the requirement $t>0$ since $\eps \in (0,1)$. Substituting \eqref{tee} in \eqref{eq:Fano14} yields the desired result.

\end{proof}

\section{Second-Order Strong Converse Bound for a Classical-Quantum Degraded Broadcast Channel} \label{sec:sc_DBC}

Let us now revert to the c-q degraded broadcast channel $\mathscr{W}^{X \to BC}$ which was introduced in Section \ref{sec:c-q_DBC}, and is the focus of this paper.
In the following theorem we establish second-order (in $n$) upper bounds to rate pairs $(R_B, R_C)$ of any $(n, R_B, R_C, \eps)$ code for such a channel $\mathscr{W}^{X \to BC}$.
\begin{theo}
[Second-order strong converse bound for a c-q DBC] \label{theo:SC_DBC}
For a c-q DBC $\mathscr{W}^{X\to BC}$ as given in Definition~\ref{defn:channel}, any $(n, R_B, R_C, \eps)$ code satisfies
	\begin{align}
%	\begin{split} 
&	R_B \leq I(X;B|U)_\omega + 2 \sqrt{ \frac{d_B }{n} \log \frac{1}{1-\eps} } + \frac1n\log\frac{1}{1-\eps}; \\
&	R_C \leq I(U;C)_\omega + 2 \sqrt{ \frac{ d_C }{n} \log \frac{1}{1-\eps} } + \frac1n\log\frac{1}{1-\eps}, \label{eq:SC_DBC}
%	\end{split}
	\end{align}
	for some $\omega_{UXBC}$ of the form
	\begin{align} \label{eq:DBC_condition}
	\omega_{UXBC} &= \sum_{x \in \cX} p_X(x) |x \rangle \langle x| \otimes \rho_U^x \otimes \rho_{BC}^x
	\end{align}
	for some probability distribution $p$ on $\mathcal{X}$, and some collection of density matrices $\{\rho_U^x\}_{x\in\mathcal{X}}$.
\end{theo}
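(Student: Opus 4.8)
The plan is to reduce each of the two rate inequalities to the single-user second-order Fano-type inequality of Theorem~\ref{theo:Fano}, and then to single-letterise the resulting entropic bounds by a time-sharing argument built around the auxiliary system $U_i := (K, B^{i-1})$, i.e.\ Charlie's message register together with the first $i-1$ of Bob's output systems. Throughout I write $B^{i-1} := B_1\cdots B_{i-1}$ (with $B^0$ empty) and similarly $C^{i-1}$, let $X_i := x_i(M,K)$ denote the $i$-th channel input, and work with the overall classical-quantum state
\begin{align*}
\rho_{MKB^nC^n} := \frac{1}{|\mathcal{M}||\mathcal{K}|}\sum_{(m,k)\in\mathcal{M}\times\mathcal{K}} |m\rangle\langle m|\otimes|k\rangle\langle k|\otimes\rho_{B^nC^n}^{x^n(m,k)}
\end{align*}
of the code $(\mathcal{E}_n,\mathcal{D}_n)$; all mutual informations below are taken with respect to marginals of this state.

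\emph{Raw bounds from Theorem~\ref{theo:Fano}.} Since the code is an $(n,R_B,R_C,\eps)$ code under the maximal error criterion we have $\Tr[\rho_{B^nC^n}^{x^n(m,k)}(\Pi_{B^n}^m\otimes\Pi_{C^n}^k)]\geq 1-\eps$ for all $(m,k)$, and because $\Pi_{B^n}^m\leq\mathbb{I}_{B^n}$ and $\Pi_{C^n}^k\leq\mathbb{I}_{C^n}$ this already gives $\Tr[\rho_{B^n}^{x^n(m,k)}\Pi_{B^n}^m]\geq 1-\eps$ and $\Tr[\rho_{C^n}^{x^n(m,k)}\Pi_{C^n}^k]\geq 1-\eps$ for every $(m,k)$. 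For Charlie I would apply Theorem~\ref{theo:Fano} with its ``$\mathcal{M}$'' taken to be $\mathcal{K}$, its ``$\mathcal{K}$'' taken to be $\mathcal{M}$ with uniform distribution $q\equiv 1/|\mathcal{M}|$, its c-q channel taken to be $x\mapsto\rho_C^x$, and its POVM taken to be $\{\Pi_{C^n}^k\}$; the geometric-average criterion \eqref{eq:Fano_condition} then holds because a geometric mean of numbers each at least $1-\eps$ is itself at least $1-\eps$. This yields $nR_C = \log|\mathcal{K}| \leq I(K;C^n)_\rho + 2\sqrt{nd_C\log\tfrac{1}{1-\eps}} + \log\tfrac{1}{1-\eps}$. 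For Bob I would apply Theorem~\ref{theo:Fano}, for each fixed $k$, to the sub-code $m\mapsto x^n(m,k)$ with a trivial ``$\mathcal{K}$'', c-q channel $x\mapsto\rho_B^x$ and POVM $\{\Pi_{B^n}^m\}$; here \eqref{eq:Fano_condition} reduces to the geometric mean over $m$ of $\Tr[\rho_{B^n}^{x^n(m,k)}\Pi_{B^n}^m]$ being at least $1-\eps$, which holds. Averaging the resulting bound uniformly over $k$ and recognising the average of the per-$k$ mutual informations as $I(M;B^n|K)_\rho$ gives $nR_B = \log|\mathcal{M}| \leq I(M;B^n|K)_\rho + 2\sqrt{nd_B\log\tfrac{1}{1-\eps}} + \log\tfrac{1}{1-\eps}$.

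\emph{Single-letterisation and time-sharing.} Because the channel is memoryless, conditioning $\rho_{MKB^nC^n}$ on the event $X_i=x$ makes the $i$-th output pair $(B_i,C_i)$ equal to $\rho_{BC}^x$ and in tensor product with all other registers; hence $\rho_{U_iX_iB_iC_i}$ is precisely of the form \eqref{eq:DBC_condition}. For Bob, the chain rule gives $I(M;B^n|K)_\rho = \sum_{i=1}^n I(M;B_i|K,B^{i-1})_\rho$, and since $X_i$ is a function of $(M,K)$ and, given $X_i$, $B_i$ is independent of $(M,K,B^{i-1})$, we get $I(M;B_i|K,B^{i-1})_\rho \leq I(M,X_i;B_i|K,B^{i-1})_\rho = I(X_i;B_i|U_i)_\rho$. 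For Charlie, the chain rule together with non-negativity of the conditional mutual information gives $I(K;C^n)_\rho = \sum_{i=1}^n I(K;C_i|C^{i-1})_\rho \leq \sum_{i=1}^n I(K,C^{i-1};C_i)_\rho$, and here the degradedness of Definition~\ref{defn:channel} enters: since at each coordinate $j<i$ only the $C_j$-marginal of $\rho_{BC}^{x_j}$ is kept, one verifies from $\rho_C^x = \mathcal{N}^{B\to C}(\rho_B^x)$ and the product structure that $\rho_{KC^{i-1}C_i} = (\id_K\otimes\mathcal{N}^{\otimes(i-1)}\otimes\id_{C_i})(\rho_{KB^{i-1}C_i})$, so the data-processing inequality gives $I(K,C^{i-1};C_i)_\rho \leq I(K,B^{i-1};C_i)_\rho = I(U_i;C_i)_\rho$. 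Summing over $i$,
\begin{align*}
nR_B \leq \sum_{i=1}^n I(X_i;B_i|U_i)_\rho + 2\sqrt{nd_B\log\tfrac{1}{1-\eps}} + \log\tfrac{1}{1-\eps}, \qquad nR_C \leq \sum_{i=1}^n I(U_i;C_i)_\rho + 2\sqrt{nd_C\log\tfrac{1}{1-\eps}} + \log\tfrac{1}{1-\eps}.
\end{align*}
I would then introduce a time-sharing variable $T$ uniform on $\{1,\dots,n\}$ and independent of everything, and set $U:=(T,U_T)$, $X:=X_T$, $B:=B_T$, $C:=C_T$, with $\omega_{UXBC}:=\frac1n\sum_i|i\rangle\langle i|_T\otimes\rho_{U_iX_iB_iC_i}$. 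Then $I(X;B|U)_\omega = \frac1n\sum_i I(X_i;B_i|U_i)_\rho$ and $I(U;C)_\omega \geq I(U_T;C|T)_\omega = \frac1n\sum_i I(U_i;C_i)_\rho$; and since each $\rho_{U_iX_iB_iC_i}$ is of the form \eqref{eq:DBC_condition}, the mixture $\omega_{UXBC}$ collapses to $\sum_x p_X(x)|x\rangle\langle x|\otimes\rho_U^x\otimes\rho_{BC}^x$ with $p_X(x):=\frac1n\sum_i p_{X_i}(x)$, so it too is of the form \eqref{eq:DBC_condition}. Dividing the two displayed inequalities by $n$ and using $\tfrac1n\sqrt{nd}=\sqrt{d/n}$ then gives the two claimed bounds.

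\emph{Main obstacle.} The reduction to Theorem~\ref{theo:Fano} is mostly bookkeeping (choosing which message plays the role of the ``message'' and which of the ``randomising'' index, and checking the geometric-average error condition). The delicate part is the single-letterisation, and in particular the Charlie bound, where one must check that $\rho_{KC^{i-1}C_i}$ really is a channel image of $\rho_{KB^{i-1}C_i}$: this works precisely because at each of the coordinates $1,\dots,i-1$ exactly one of $B_j,C_j$ is discarded, so only the marginals $\rho_B^{x_j}$ and $\rho_C^{x_j}=\mathcal{N}(\rho_B^{x_j})$ are involved, which is exactly what the (marginal) degradedness hypothesis provides. One also has to confirm carefully that $\rho_{U_iX_iB_iC_i}$ and the time-shared $\omega_{UXBC}$ have exactly the structural form \eqref{eq:DBC_condition} demanded in the statement.
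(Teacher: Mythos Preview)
Your proof is correct and follows essentially the same strategy as the paper: reduce to Theorem~\ref{theo:Fano}, identify $U_i=(K,B^{i-1})$, use degradedness for Charlie's bound, and time-share. The only variation is that for Bob you apply Theorem~\ref{theo:Fano} separately for each fixed $k$ and average (landing directly on $I(M;B^n|K)_\rho$), whereas the paper applies it once with $q(k)=1/|\mathcal{K}|$ to get $I(M;B^n)_\rho$ and then uses $I(M;B^n)_\rho\le I(M;B^n|K)_\rho$ (from independence of $M$ and $K$) as the first step of the single-letterisation; the two routes are equivalent.
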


\begin{remark}
	Theorem~\ref{theo:SC_DBC} can be extended to the case in which Alice transmits common information (at a rate $R$, say), in addition to private information.
	It can be verified that if $(R, R_B, R_C)$ is an $\eps$-achievable rate triple, then $(0, R_B, R+ R_C)$ is also $\eps$-achievable. An intuitive way to see this is as follows: Bob can disregard the common information that he receives, while Charlie can consider the common information that he receives as part of his private information, without affecting the error probability of the protocol.	
	This was stated for the case $\eps=0$ for a c-q DBC in \cite{YHD11} and is well-known for the case of classical broadcast channels (see e.g.~\cite{GK11}).
	Due to this reason it suffices to incorporate the common information rate into the rate of private information transmission to Charlie.
	If the common information rate $R$ is non-zero, then the left-hand side of \eqref{eq:SC_DBC} should be read as $R+R_C$.
\end{remark}

\begin{proof}
	[Proof of Theorem~\ref{theo:SC_DBC}]
	
	Let 
	\begin{align}
	\rho_{MKX^nB^nC^n} = \frac{1}{|\mathcal{M}||\mathcal{K}|} \sum_{m\in \mathcal{M}} \sum_{k \in \mathcal{K}} |m\rangle\langle m| \otimes |k\rangle\langle k| \otimes |x^n(m,k)\rangle\langle x^n(m,k) |\otimes \rho_{B^n C^n}^{x^n(m,k)}.
	\end{align}
	Observe that
	\begin{align}
	\min\left\{ \Tr\left[ \rho_{B^n}^{x^n(m,k)} \Pi_{B^n}^{m}  \right] , \Tr\left[ \rho_{ C^n}^{x^n(m,k)} \Pi_{C^n}^{k} \right] \right\} \geq
	\Tr\left[ \rho_{B^n C^n}^{x^n(m,k)} \Pi_{B^n}^{m} \otimes \Pi_{C^n}^{k} \right] \geq 1-\eps
	\end{align}
	by definition of an $(n,R_B, R_C, \eps)$ code.
	Hence, the $(n,R_B, R_C, \eps)$-code satisfies the geometric average error criterion given by \eqref{eq:Fano_condition} (cf.~Remark \ref{remark:error})
	We then apply the second-order Fano-type inequality, Theorem~\ref{theo:Fano}, with the choice $q(k) = \frac{1}{|\mathcal{K}|}$ for every $k\in\mathcal{K}$	and $n R_B =  \log |\mathcal{M}|$, $n R_C = \log |\mathcal{K}|$
	to obtain the following upper bounds for the rate pair:
	\begin{align}
	\begin{split} \label{eq:DBC0}
	n R_B &\leq I(M;B^n)_\rho + 2 \sqrt{ n d_B  \log \frac{1}{1-\eps} } + \log\frac{1}{1-\eps};  \\
	n R_C &\leq I(K;C^n)_\rho + 2 \sqrt{ n d_C  \log \frac{1}{1-\eps} } + \log\frac{1}{1-\eps}. 
	\end{split}
	\end{align}
	
	To complete the proof, we need to find upper bounds on $I(M;B^n)_\rho$ and $I(K;C^n)_\rho$ in terms of single-letter entropic quantities.
	This was done by Yard \emph{et al.}~\cite[Theorem 2]{YHD11} following the same idea that was used by Gallager~\cite{Gal74} in the classical case, and which is
often referred to as \emph{identification of the auxiliary random variable} (see also \cite[Chapter 5.4]{GK11}).
	The upper bounds obtained are given by 
\begin{align}
\begin{split} \label{eq:upper}
I(M;B^n)_\rho &\leq n I(X;B|U)_{\omega}, \\
I(K; C^n)_\rho &\leq n I(U;C)_{\omega},
\end{split}
\end{align}
for some quantum state $\omega_{UXBC}$ of the form given in \eqref{eq:DBC_condition} of the statement of Theorem \ref{theo:SC_DBC}. For the sake of completeness, we include the proof in Appendix~\ref{proof:upper}.  This concludes the proof of Theorem~\ref{theo:SC_DBC}.
\end{proof}

%{\bf{add words}}

Taking the limit $n\to \infty$, on both sides of the inequalities in Theorem~\ref{theo:SC_DBC} directly shows that the $\eps$-capacity region $\mathcal{C}_{\mathscr{W}}(\eps)$ is contained in $\mathcal{C}^\text{ent}_{\mathscr{W}}$ for all $\eps \in(0,1)$. This in turn demonstrates the strong converse property for the c-q DBC, stated in Corollary~\ref{coro:epsilon}.
In other words, for any sequence of codes with rate pair $(R_B, R_C) \not\in \mathcal{C}^{\operatorname{ent}}_{\mathscr{W}}$, transmission of private information from Alice to Bob and Charlie fails with certainty, no matter how many times the channel is used.

\begin{coro}
	[Strong Converse Property] \label{coro:epsilon}
	For a c-q DBC $\mathscr{W}^{X\to BC}$ as given in Definition~\ref{defn:channel}, the following holds:
	\begin{align}
	\mathcal{C}_{\mathscr{W}}(\eps) \subseteq \mathcal{C}_{\mathscr{W}}^{\textnormal{ent}}, \quad \forall \eps\in(0,1).
	\end{align}
\end{coro}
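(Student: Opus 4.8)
The plan is to deduce Corollary~\ref{coro:epsilon} directly from the second-order strong converse bound of Theorem~\ref{theo:SC_DBC} by taking the asymptotic limit $n\to\infty$. First I would observe that, by definition, if $(R_B,R_C)\in\cR_{\mathscr{W}}(\eps)$ then there is a sequence of $(n,R_B,R_C,\eps_n)$ codes with $\eps_n\to\eps$, and in particular (discarding finitely many terms if necessary) we may assume $\eps_n\le\eps'$ for some fixed $\eps'\in(\eps,1)$. Applying Theorem~\ref{theo:SC_DBC} to each such code yields, for every $n$ (large enough), quantities of the form
\begin{align}
R_B &\leq I(X;B|U)_{\omega^{(n)}} + 2\sqrt{\tfrac{d_B}{n}\log\tfrac{1}{1-\eps'}} + \tfrac1n\log\tfrac{1}{1-\eps'},\\
R_C &\leq I(U;C)_{\omega^{(n)}} + 2\sqrt{\tfrac{d_C}{n}\log\tfrac{1}{1-\eps'}} + \tfrac1n\log\tfrac{1}{1-\eps'},
\end{align}
for some state $\omega^{(n)}_{UXBC}$ of the form \eqref{eq:DBC_condition}.

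The next step is a compactness argument. The states $\omega^{(n)}_{UXBC}$ range over a set parametrized by a probability distribution $p^{(n)}$ on the finite set $\mathcal{X}$ and a finite collection of density matrices $\{(\rho_U^x)^{(n)}\}_{x\in\mathcal{X}}$ on $\cH_U$; since $\cH_U$ can be taken to have the $n$-independent bounded dimension $|\mathcal{U}|\le\min\{|\mathcal{X}|,d_B^2+d_C^2-1\}$ (as recalled from \cite{YHD11}), this parameter space is a compact metric space. Hence $\{\omega^{(n)}\}_{n}$ has a convergent subsequence $\omega^{(n_j)}\to\omega^{\star}$, with $\omega^{\star}$ again of the form \eqref{eq:DBC_condition}. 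Along this subsequence the error terms $2\sqrt{d_B/n_j\,\log\frac{1}{1-\eps'}}$ and $\frac{1}{n_j}\log\frac{1}{1-\eps'}$ vanish, and by continuity of the von Neumann entropy (hence of the conditional mutual information $I(X;B|U)_\omega$ and the mutual information $I(U;C)_\omega$ on a fixed finite-dimensional space), passing to the limit gives $R_B\le I(X;B|U)_{\omega^{\star}}$ and $R_C\le I(U;C)_{\omega^{\star}}$. Therefore $(R_B,R_C)\in\mathcal{R}_{\mathscr{W}}^{\mathrm{ent}}$, which proves $\cR_{\mathscr{W}}(\eps)\subseteq\mathcal{R}_{\mathscr{W}}^{\mathrm{ent}}$. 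Taking closures, and using that $\mathcal{C}_{\mathscr{W}}^{\mathrm{ent}}=\overline{\mathcal{R}_{\mathscr{W}}^{\mathrm{ent}}}$ is already closed, yields $\mathcal{C}_{\mathscr{W}}(\eps)=\overline{\cR_{\mathscr{W}}(\eps)}\subseteq\mathcal{C}_{\mathscr{W}}^{\mathrm{ent}}$ for every $\eps\in(0,1)$.

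The main obstacle — really the only non-bookkeeping point — is justifying the convergence $I(X;B|U)_{\omega^{(n_j)}}\to I(X;B|U)_{\omega^{\star}}$ and likewise for $I(U;C)$. This needs that the relevant entropic functionals are continuous on the compact parameter set; this is immediate because all systems involved are finite-dimensional with dimensions that do not grow with $n$, so von Neumann entropy is (uniformly) continuous there and the mutual-information quantities are finite linear combinations of such entropies. One should also take a little care that the cardinality bound on $|\mathcal{U}|$ genuinely holds for the $\omega^{(n)}$ produced by Theorem~\ref{theo:SC_DBC} — this follows from the auxiliary-random-variable identification argument (Appendix~\ref{proof:upper}), which already incorporates the dimension reduction à la \cite{YHD11}. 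Everything else is routine: the subsequence extraction, the vanishing of the $O(1/\sqrt n)$ correction terms, and the closure step.
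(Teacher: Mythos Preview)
Your overall strategy---deduce the corollary from Theorem~\ref{theo:SC_DBC} by letting $n\to\infty$---is exactly what the paper does. However, you have introduced an unnecessary compactness/continuity layer, and the justification you give for it contains a gap.

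The problematic step is your claim that the cardinality bound on $|\mathcal{U}|$ ``follows from the auxiliary-random-variable identification argument (Appendix~\ref{proof:upper}), which already incorporates the dimension reduction \`a la \cite{YHD11}.'' It does not. In Appendix~\ref{proof:upper} the auxiliary system is identified as $U=(T,K,B^{T-1})$, whose Hilbert-space dimension is of order $n\cdot|\mathcal{K}|\cdot d_B^{\,n-1}$ and hence grows with $n$. No Carath\'eodory-type reduction is performed there, and the bound $|\mathcal{U}|\le\min\{|\mathcal{X}|,d_B^2+d_C^2-1\}$ recalled in the paper's footnote is stated for the \emph{classical} auxiliary variable in the achievability region, not for the quantum $U$ appearing in the converse state \eqref{eq:DBC_condition}. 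So your compactness argument, as written, is not substantiated.

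The good news is that none of this machinery is needed. From Theorem~\ref{theo:SC_DBC} you get, for each large $n$, a state $\omega^{(n)}$ with
\[
R_B - a_n \le I(X;B|U)_{\omega^{(n)}},\qquad R_C - b_n \le I(U;C)_{\omega^{(n)}},
\]
where $a_n,b_n\to 0$. Hence (for $n$ large enough that both left sides are nonnegative) the shifted pair $(R_B-a_n,\,R_C-b_n)$ lies in $\mathcal{R}_{\mathscr{W}}^{\mathrm{ent}}$ by definition. Since these pairs converge to $(R_B,R_C)$ and $\mathcal{C}_{\mathscr{W}}^{\mathrm{ent}}=\overline{\mathcal{R}_{\mathscr{W}}^{\mathrm{ent}}}$ is closed, $(R_B,R_C)\in\mathcal{C}_{\mathscr{W}}^{\mathrm{ent}}$. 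This is presumably what the paper means by ``taking the limit $n\to\infty$,'' and it sidesteps any question about the dimension of $\mathcal{H}_U$, subsequence extraction, or continuity of entropies.
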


In fact, Theorem~\ref{theo:SC_DBC} yields a finite blocklength strong converse, namely, that the maximal error of any $(n, R_B, R_C)$-code converges to $1$ exponentially fast (in $n$) whenever $(R_B, R_C) \not\in \mathcal{C}^\text{ent}_{\mathscr{W}}$. This is stated in the following corollary.
\begin{coro}
	[Exponential Strong Converse] \label{coro:exponential}
	For a c-q DBC $\mathscr{W}^{X\to BC}$ as given in Definition~\ref{defn:channel} and any non-negative rate pair $(R_B, R_C) \not\in \mathcal{C}_{\mathscr{W}}^{\operatorname{ent}}$, the maximal error of any $(n, R_B, R_C)$ code $\left( \mathcal{E}_n, \mathcal{D}_n \right)$ satisfies
	\begin{align}
	p_{\max}\left( \mathcal{E}_n, \mathcal{D}_n \right) \geq 1 - \e^{-nf},
	\end{align}
	where 
	\begin{align}
	f = \left( \sqrt{ (\sqrt{ d_B } + \sqrt{ d_C })^2 + \eta } - \sqrt{d_B} - \sqrt{d_C} \right)^2 > 0
	\end{align}
	for some $\eta >0$ depending only on how far the rate pair $(R_B, R_C)$ is from the region $\mathcal{C}_\mathscr{W}^{\operatorname{ent}}$.
\end{coro}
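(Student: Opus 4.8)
The plan is to derive Corollary~\ref{coro:exponential} directly from the finite-blocklength bounds of Theorem~\ref{theo:SC_DBC}. Suppose $(R_B, R_C) \notin \mathcal{C}_{\mathscr{W}}^{\operatorname{ent}}$. By definition of $\mathcal{C}_{\mathscr{W}}^{\operatorname{ent}}$ as the closure of the union in \eqref{eq:C_W}, the point $(R_B,R_C)$ is a positive distance away from the region; more precisely, since for every admissible state $\omega_{UXBC}$ of the form \eqref{eq:DBC_condition} at least one of the two inequalities $R_B \le I(X;B|U)_\omega$ or $R_C \le I(U;C)_\omega$ fails, and by a compactness argument over the (compact) set of such states one gets a uniform gap: there exists $\eta > 0$ such that for every such $\omega$, either $R_B \ge I(X;B|U)_\omega + \eta$ or $R_C \ge I(U;C)_\omega + \eta$. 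This $\eta$ is exactly the quantity referred to in the statement as "how far $(R_B,R_C)$ is from $\mathcal{C}_\mathscr{W}^{\operatorname{ent}}$".

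Next I would argue by contradiction on the error probability. Fix $n$ and a code $(\mathcal{E}_n,\mathcal{D}_n)$ with maximal error $\eps := p_{\max}(\mathcal{E}_n,\mathcal{D}_n)$. Apply Theorem~\ref{theo:SC_DBC}: there is a state $\omega_{UXBC}$ of the form \eqref{eq:DBC_condition} with
\begin{align}
R_B \le I(X;B|U)_\omega + 2\sqrt{\tfrac{d_B}{n}\log\tfrac{1}{1-\eps}} + \tfrac1n\log\tfrac1{1-\eps}, \qquad
R_C \le I(U;C)_\omega + 2\sqrt{\tfrac{d_C}{n}\log\tfrac1{1-\eps}} + \tfrac1n\log\tfrac1{1-\eps}.
\end{align}
For that $\omega$, the uniform gap says $R_B \ge I(X;B|U)_\omega + \eta$ or $R_C \ge I(U;C)_\omega + \eta$; in the first case we get $\eta \le 2\sqrt{\tfrac{d_B}{n}\log\tfrac1{1-\eps}} + \tfrac1n\log\tfrac1{1-\eps}$, and symmetrically in the second with $d_C$. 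Writing $s := \tfrac1n\log\tfrac1{1-\eps} \ge 0$, in either case $\eta \le 2\sqrt{d\, s} + s$ with $d \in \{d_B, d_C\}$, hence $\eta \le 2\sqrt{(\sqrt{d_B}+\sqrt{d_C})^2 s} + s = (\sqrt s + \sqrt{(\sqrt{d_B}+\sqrt{d_C})^2 + s})^2 - (\sqrt{d_B}+\sqrt{d_C})^2 - \cdots$; more cleanly, $2\sqrt{ds}+s \le (\sqrt{s}+\sqrt{d})^2 \le (\sqrt s + \sqrt{d_B}+\sqrt{d_C})^2$, so $\sqrt{s} \ge \sqrt{(\sqrt{d_B}+\sqrt{d_C})^2 + \eta} - (\sqrt{d_B}+\sqrt{d_C})$. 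Squaring gives $s = \tfrac1n\log\tfrac1{1-\eps} \ge f$ with $f$ as in the statement, which rearranges to $1 - \eps \le \e^{-nf}$, i.e.\ $p_{\max}(\mathcal{E}_n,\mathcal{D}_n) = \eps \ge 1 - \e^{-nf}$. One should double-check $f > 0$: since $\eta > 0$, the function $a \mapsto \sqrt{a^2+\eta}-a$ is strictly positive, so $f>0$.

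The only genuinely non-routine step is establishing the \emph{uniform} gap $\eta$: passing from "$(R_B,R_C)$ lies outside the union, so for each $\omega$ one inequality fails" to "there is a single $\eta>0$ working for all $\omega$ simultaneously". The hard part is the compactness/continuity argument — one needs the set of states $\omega_{UXBC}$ of the form \eqref{eq:DBC_condition} to be compact (it is, being parametrized by the probability simplex on $\mathcal{X}$ and a bounded number of density matrices $\rho_U^x$, with $U$ of bounded dimension by the cardinality bound of Yard \emph{et al.}) and the maps $\omega \mapsto I(X;B|U)_\omega$ and $\omega \mapsto I(U;C)_\omega$ to be continuous, so that $\omega \mapsto \max\{\,I(X;B|U)_\omega + R_B\text{-slack},\ \dots\}$ — precisely, $\omega \mapsto \min\{R_B - I(X;B|U)_\omega,\ R_C - I(U;C)_\omega\}$ — is lower semicontinuous and, being positive on a compact set whenever $(R_B,R_C)$ is strictly outside, attains a positive minimum $\eta$. (If $(R_B,R_C)$ is merely on the boundary the statement is vacuous since then it does lie in the closed region $\mathcal{C}_{\mathscr{W}}^{\operatorname{ent}}$.) Everything else is the elementary algebraic manipulation of $2\sqrt{ds}+s$ sketched above, together with the observation from Theorem~\ref{theo:SC_DBC}'s remark that the common-information case reduces to the private-only case by folding $R$ into $R_C$.
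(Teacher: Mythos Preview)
Your approach is correct and genuinely different from the paper's. The paper does not extract a coordinate-wise uniform gap; instead it proves (in Appendix~\ref{app:concavity}) that the function $F(t):=\sup_\rho\{I(X;B|U)_\rho: I(U;C)_\rho\ge t\}$ is concave, and then uses Lagrange duality to find $\mu^\star\ge 0$ and $\gamma>0$ with
\[
R_B+\mu^\star R_C \ge \sup_\rho\{I(X;B|U)_\rho+\mu^\star I(U;C)_\rho\}+\gamma.
\]
Taking the $(1,\mu^\star)$-combination of the two bounds in Theorem~\ref{theo:SC_DBC} and setting $\eta=\gamma/(1+\mu^\star)$ yields the same quadratic in $x_n=\sqrt{\log\frac{1}{1-\eps}}$ that you reach. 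Your route is more elementary in that it bypasses the concavity lemma altogether; the paper's route, on the other hand, gives a slightly more explicit description of $\eta$ via the supporting hyperplane.

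Two points on execution. First, the compactness justification you propose for the uniform gap is both unnecessary and shaky: in the converse state~\eqref{eq:DBC_condition} the system $U$ is \emph{quantum}, and the paper gives no dimension bound for it (the Yard \emph{et al.}\ cardinality bound you cite is for the classical auxiliary in the achievability part). Fortunately you do not need compactness. Since $\mathcal{C}_\mathscr{W}^{\operatorname{ent}}$ is closed by definition and $(R_B,R_C)\notin \mathcal{C}_\mathscr{W}^{\operatorname{ent}}$, there is $\delta>0$ such that the distance from $(R_B,R_C)$ to each rectangle $[0,I(X;B|U)_\omega]\times[0,I(U;C)_\omega]\subseteq \mathcal{C}_\mathscr{W}^{\operatorname{ent}}$ is at least $\delta$; an elementary case check then gives $\max\{R_B-I(X;B|U)_\omega,\,R_C-I(U;C)_\omega\}\ge \delta/\sqrt{2}$ for every $\omega$, which is your $\eta$.

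Second, your ``more cleanly'' algebra is not quite right: from $\eta\le (\sqrt{s}+\sqrt{d_B}+\sqrt{d_C})^2$ you would only get $\sqrt{s}\ge \sqrt{\eta}-(\sqrt{d_B}+\sqrt{d_C})$, which may be negative. The correct (and equally short) chain, with $D:=\sqrt{d_B}+\sqrt{d_C}$, is
\[
\eta \le 2\sqrt{d}\,\sqrt{s}+s \le 2D\sqrt{s}+s=(\sqrt{s}+D)^2-D^2,
\]
so $(\sqrt{s}+D)^2\ge D^2+\eta$ and hence $\sqrt{s}\ge \sqrt{D^2+\eta}-D$, which is exactly the conclusion you state.
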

\begin{proof}
	[Proof of Corollary~\ref{coro:exponential}]
	Let us first define the function
	\begin{align} 
	F(t) &:= \sup_{ { \rho    } } \left\{ I(X;B|U)_\rho:  I(U;C)_\rho \geq t \right\}, \quad \forall t\geq 0,
	\end{align}
	where the supremum is taken over all states $\rho\equiv \rho_{UXBC}$ of the form of \eqref{eq:DBC_condition}.
By the definition of $\mathcal{C}_\mathscr{W}^{\operatorname{ent}}$ given in \eqref{eq:C_W},
	$(R_B, R_C) \not\in \mathcal{C}_{\mathscr{W}}^{\operatorname{ent}}$ implies that
	\begin{align}
	R_B > F(R_C). \label{eq:exp2}
	\end{align}
	In Appendix~\ref{app:concavity}, we prove that $F(t)$ is a concave function in $t\geq 0$. Therefore, by the method of Lagrange multipliers, inequality \eqref{eq:exp2} can be further written as
	\begin{align}
		R_B > \inf_{\mu\geq 0} \sup_{ { \rho } } \left\{ I(X;B|U)_\rho + \mu I(U;C)_\rho - \mu R_C \right\}.
	\end{align}
	Hence, there must exist some $\mu^\star \in \mathbb{R}_{\geq 0}$ and $\gamma >0$ such that
	\begin{align}
	R_B + \mu^\star R_C \geq \sup_{ \rho  }  \left\{I(X;B|U)_\rho + \mu^\star I(U;C)_\rho \right\} + \gamma, \label{eq:exponential1}
	\end{align}
	
	On the other hand, Theorem~\ref{theo:SC_DBC} guarantees that any $(n, R_B, R_C)$ code $\left( \mathcal{E}_n, \mathcal{D}_n \right)$ with $p_{\max} \left( \mathcal{E}_n, \mathcal{D}_n \right)\leq \eps \in (0,1)$	
	 satisfies
	\begin{align}
	R_B \leq I(X;B|U)_\omega + 2 \sqrt{ \frac{d_B }{n} \log \frac{1}{1-\eps} } + \frac1n\log\frac{1}{1-\eps}, \\
	R_C \leq I(U;C)_\omega + 2 \sqrt{ \frac{d_C }{n} \log \frac{1}{1-\eps} } + \frac1n\log\frac{1}{1-\eps}
	\end{align}
	for some $\omega_{UXBC}$ of the form \eqref{eq:DBC_condition}.
	Defining $x_n^2 := \log \frac{1}{1-\eps}$, then we have
	\begin{align}
	R_B + \mu^\star R_C &\leq I(X;B|U)_\omega + \mu^\star I(U;C)_\omega + 2\frac{(1+\mu^\star)}{\sqrt{n}} (\sqrt{d_B} + \sqrt{d_B}) x_n +	\frac{(1+\mu^\star)}{n} x_n^2 \\
	&\leq  \sup_{ \rho   }  \left\{I(X;B|U)_\rho + \mu^\star I(U;C)_\rho \right\} + 2\frac{(1+\mu^\star)}{\sqrt{n}} (\sqrt{d_B} + \sqrt{d_C}) x_n +	\frac{(1+\mu^\star)}{n} x_n^2. \label{eq:exponential2}
	\end{align}
	Combining \eqref{eq:exponential1} and \eqref{eq:exponential2} gives
	\begin{align}
	(1+\mu^\star) x_n^2 + 2(1+\mu^\star)(\sqrt{n d_B} + \sqrt{n d_C}) x_n - n \gamma \geq 0.
	\end{align}
	Solving this and choosing $\eta = \frac{\gamma}{1+\mu^\star} > 0$ concludes the proof of the corollary.	
\end{proof}

\section*{Acknowledgements}
HC was supported by the Cambridge University Fellowship and the Ministry of Science and Technology Overseas Project for Post Graduate Research (Taiwan) under Grant 108-2917-I-564-042.
CR is supported by the TUM University Foundation Fellowship.
We thank Jingbo Liu for helpful discussions.

\appendix
\section{Proof of (\ref{eq:upper})  } \label{proof:upper}
For each $i \in \{1,\ldots, n\}$, we introduce an auxiliary composite quantum system $U_i = (K,B^{i-1})$.
We upper bound the first term in ~\eqref{eq:upper} as follows:
\begin{align}
I(M;B^n)_\rho &\leq I(M; KB^n)_\rho \label{eq:DBC8} \\
&= I(M; KB^n)_\rho - I(M;K)_\rho \label{eq:DBC1}\\
&= I(M; B^n | K)_\rho \label{eq:DBC2}\\
&= \sum_{i=1}^n I(M; B_i | K, B^{i-1})_\rho \label{eq:DBC3}\\
&= \sum_{i=1}^n I(M; B_i | U_i )_\rho \\
&\leq \sum_{i=1}^n I(M; B_i | U_i )_\rho + I(X_i; Y_i| M U_i)_\rho \label{eq:DBC4}\\
&= \sum_{i=1}^n I(X_i, M; B_i | U_i )_\rho \label{eq:DBC5} \\
&= \sum_{i=1}^n I(X_i ; B_i| U_i )_\rho + I(M; B_i | X_i U_i )_\rho \label{eq:DBC6}\\
&= \sum_{i=1}^n I(X_i ; B_i | U_i )_\rho. \label{eq:DBC7}
\end{align}
Here, inequality \eqref{eq:DBC8} is due to monotonicity of the mutual information with respect to the partial trace. Identity~\eqref{eq:DBC1} is because $M$ and $K$ are uncorrelated.
Equalities \eqref{eq:DBC2}, \eqref{eq:DBC3}, \eqref{eq:DBC5}, and \eqref{eq:DBC6} follow from the chain rule of quantum mutual information:	
%	\begin{align}
%	\begin{split}
$I(A^n:C |B)_\rho = \sum_{i=1}^n I(A_i;C|B, A^{i-1})_\rho$ and 
$I(A; C^n| B)_\rho = \sum_{i=1}^n I(A;C_i|B, C^{i-1})_\rho$.
%	\end{split}
%	\end{align}
Inequality \eqref{eq:DBC4} is due to the non-negativity of the conditional quantum mutual information.
The last line \eqref{eq:DBC7} holds because of the Markov chain:
%	\begin{align}
$M - (K,X_i, B^{i-1}) - B_i$.
%	\end{align}
To see this, the right quantum system $B_i$ can be produced by knowing the value of $X_i$.

Next, we consider the second term in  ~\eqref{eq:upper}:
\begin{align}
I(K; C^n)_\rho &= \sum_{i=1}^n I(K; C_i| C^{i-1} )_\rho \label{eq:DBC10} \\
&=  \sum_{i=1}^n H(C_i| C^{i-1})_\rho - H(C_i| K C^{i-1})_\rho \\
&\leq \sum_{i=1}^n H(C_i)_\rho - H(C_i| K C^{i-1})_\rho \label{eq:DBC11} \\
&\leq \sum_{i=1}^n H(C_i)_\rho - H(C_i| K B^{i-1})_\rho \label{eq:DBC12} \\
&= \sum_{i=1}^n I(K, B^{i-1}; C_i)_\rho. \label{eq:DBC9}
\end{align}
Here, equalities \eqref{eq:DBC10} and \eqref{eq:DBC9} are again by the chain rule. 
Inequality \eqref{eq:DBC11} is because conditioning reduces entropies.
Inequality \eqref{eq:DBC12} follows from the data processing with respect to the tensor product of the degrading quantum operation $\mathcal{N}^{B\to C}$.

Now, we introduce a \emph{time-sharing} random variable $T$ that is uniform on $\{1,\ldots, n\}$ and independent of other systems.
Identify $U = (T,K,B^{T-1})$, which clearly satisfies  ~\eqref{eq:DBC_condition}.
We have the following bounds of  ~\eqref{eq:DBC7} and \eqref{eq:DBC9}, respectively:
\begin{align}
\sum_{i=1}^n I(X_i, ; B_i | U_i )_\rho &= n I(X_T; B_T| TK B^{T-1} )_{T\otimes \rho} \\
&= n I(X;B|U)_{\omega},
\end{align}
and
\begin{align}
\sum_{i=1}^n I(K B^{i-1}; C_i)_\rho &= n I(K, B^{T-1}; C_T | T)_{T\otimes \rho} \\
&\leq n \left[  I(K B^{T-1}; C_T | T)_{T\otimes \rho} + I(T; C_T)_{T\otimes \rho} \right] \\
&= n I(I, K B^{T-1}; C_T)_{T\otimes \rho} \\
&= n I(U;C)_{\omega}.
\end{align}
\qed

%\section{Extension to Common Information} \label{appendix:B}
%By the definition of an $\eps$-achievable rate triple $(R, R_B, R_C)$, there exists a code $(\mathcal{E}_n, \mathcal{D}_n)$ such that 
%\begin{align}
%\min_{(m, k, l) \in[1:\floor{ 2^{nR}} ]\times [1:\floor{ 2^{nR_B}} ]\times [1:\floor{ 2^{nR_C}} ] } \Tr[ \rho_{B^n}^{x^n(m,k,l)} \Pi_{B^n}^{m,k} \otimes \Pi_{C^n}^{m,l} ] \geq 1-\eps.
%\end{align}
%Now, we construct a new code $(\mathcal{E}_n', \mathcal{D}_n')$ as follows.
%Letting a new index $k' \in  [1:\floor{ 2^{n(R+R_C)}} ]$ which relabels $ (m,l) \in
%[1:\floor{ 2^{nR}} ]\times [1:\floor{ 2^{nR_C}} ]$, and choosing $m' = (k,m)$ for an arbitrary $m$, the following holds:
%\begin{align}
%\min_{(m', k') \in [1:\floor{ 2^{nR_B}} ]\times [1:\floor{ 2^{n(R+R_C)}} ] } \Tr[ \rho_{B^n}^{x^n(m',k')} \Pi_{B^n}^{m'} \otimes \Pi_{C^n}^{k'} ] \geq 1-\eps,
%\end{align}
%which implies that the code $(\mathcal{E}_n', \mathcal{D}_n')$ is $(0, R_B, R+R_C)$ is $\eps$-achievable.

\section{A Concavity Property} \label{app:concavity}

We define the following function:
\begin{align} 
F(t) &:= \sup_{ { \rho \in \Sigma(\mathscr{W})   } } \left\{ I(X;B|U)_\rho:  I(U;C)_\rho \geq t \right\}, \quad \forall t\geq 0; \label{eq:rate_function} \\
\Sigma(\mathscr{W}) &:= \left\{
\rho_{UXBC} = \bigoplus_{ x \in\mathcal{X}  }\, p(x) \rho_U^x\otimes \rho_{BC}^x: p \text{ is a probability distribution on } \mathcal{X}, \text{and } \left\{\rho_U^x \right\}_{x\in\mathcal{X}} \subset \mathcal{D}(\mathcal{H}_U)
\right\}. \label{eq:set}
\end{align}

The following concavity of the function $F(t)$ can be proved by following similar idea of Ahlswede and K\"orner \cite{AK75}. For completeness, we provide a proof here.
\begin{theo}
	\label{theo:concavity}
	The function $F(t)$ defined in ~\eqref{eq:rate_function} is concave for all $t\geq 0$.
	Moreover,
	\begin{align} \label{eq:infimum}
	F(t) = \inf_{ \mu \geq 0 } \sup_{\rho\in\Sigma(\mathscr{W}) } \left\{
	 I(X;B|U)_\rho  + \mu I(U;C)_\rho - \mu t
	\right\}.
	\end{align}
\end{theo}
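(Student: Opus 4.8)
The plan is to prove concavity directly from the definition by a time-sharing argument, and then to derive the Lagrangian characterization~\eqref{eq:infimum} as a standard consequence of concavity. First I would set up the time-sharing construction: fix $t_0, t_1 \geq 0$ and $\lambda \in [0,1]$, and let $\rho^{(0)}, \rho^{(1)} \in \Sigma(\mathscr{W})$ be states (or near-optimizers, up to $\delta>0$) achieving $I(U;C)_{\rho^{(j)}} \geq t_j$ and $I(X;B|U)_{\rho^{(j)}}$ close to $F(t_j)$. Introduce a classical ``switch'' random variable $S$ with $\Pr(S=0)=\lambda$, $\Pr(S=1)=1-\lambda$, and form the state $\rho^{\lambda}_{SUXBC} = \lambda\,|0\rangle\langle0|_S \otimes \rho^{(0)}_{UXBC} + (1-\lambda)\,|1\rangle\langle1|_S \otimes \rho^{(1)}_{UXBC}$. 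The key point is that, absorbing $S$ into the auxiliary system by setting $U' := (S,U)$, the resulting state $\rho^\lambda_{U'XBC}$ still lies in $\Sigma(\mathscr{W})$: conditioned on each value of $(s,u)$ the $B$--$C$ part is one of the original $\rho^x_{BC}$'s, so the required structure~\eqref{eq:set} is preserved, and in particular $X-B-C$ degradedness via $\mathcal{N}^{B\to C}$ is untouched.

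Next I would compute the two entropic quantities for this state. Because $S$ is classical and is part of the conditioning system $U'$, conditional mutual information is additive over the branches: $I(X;B|U')_{\rho^\lambda} = \lambda\, I(X;B|U)_{\rho^{(0)}} + (1-\lambda)\, I(X;B|U)_{\rho^{(1)}}$. For the other term, since $S$ is independent of everything a priori and $I(S;C) \geq 0$, one gets $I(U';C)_{\rho^\lambda} = I(SU;C)_{\rho^\lambda} \geq I(U;C \,|\, S)_{\rho^\lambda} = \lambda\, I(U;C)_{\rho^{(0)}} + (1-\lambda)\, I(U;C)_{\rho^{(1)}} \geq \lambda t_0 + (1-\lambda) t_1$ (the chain rule $I(SU;C) = I(U;C|S) + I(S;C)$ together with non-negativity of $I(S;C)$). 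Hence $\rho^\lambda$ is feasible for the constraint at level $\lambda t_0 + (1-\lambda)t_1$, so $F(\lambda t_0 + (1-\lambda)t_1) \geq I(X;B|U')_{\rho^\lambda} \geq \lambda F(t_0) + (1-\lambda) F(t_1) - \delta$; letting $\delta \to 0$ gives concavity. (One should also check $F$ is finite and non-increasing on the relevant range, or at least bounded, so that the supremum and the concavity inequality are meaningful; finiteness follows from $I(X;B|U) \leq \log d_B$ and $I(U;C) \leq \log d_C$, and monotonicity in $t$ is immediate from the definition since a larger $t$ shrinks the feasible set.)

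Finally, for the identity~\eqref{eq:infimum}: for every $\mu \geq 0$ and every feasible $\rho$ with $I(U;C)_\rho \geq t$ we have $I(X;B|U)_\rho \leq I(X;B|U)_\rho + \mu\big(I(U;C)_\rho - t\big) \leq \sup_{\rho'\in\Sigma(\mathscr{W})}\{I(X;B|U)_{\rho'} + \mu I(U;C)_{\rho'}\} - \mu t$, so taking the supremum over feasible $\rho$ and then the infimum over $\mu$ gives $F(t) \leq$ RHS of~\eqref{eq:infimum}. The reverse inequality is where concavity is used: since $F$ is concave (and finite, non-increasing) on $[0,\infty)$, at the point $t$ it has a supporting line with some slope $-\mu^\star \leq 0$, i.e.\ $F(s) \leq F(t) - \mu^\star(s-t)$ for all $s \geq 0$; evaluating the parametrized objective at any $\rho$ with $I(U;C)_\rho = s$ gives $I(X;B|U)_\rho + \mu^\star I(U;C)_\rho - \mu^\star t \leq F(s) + \mu^\star s - \mu^\star t \leq F(t)$, hence $\sup_\rho\{\cdots\} \leq F(t)$ for this particular $\mu^\star$, which bounds the infimum over $\mu$ by $F(t)$. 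The main obstacle I anticipate is the bookkeeping around the supporting-line / one-sided-derivative argument when $t=0$ is a boundary point (one may need a right-derivative or to argue the supporting slope is non-negative because $F$ is non-increasing, so that $\mu^\star \geq 0$ as required), and making sure suprema are attained or handled via $\delta$-approximants throughout; the time-sharing step itself is routine once $U' = (S,U)$ is recognized as the right auxiliary variable.
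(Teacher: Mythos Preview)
Your proposal is correct and follows essentially the same approach as the paper: the time-sharing construction with a Bernoulli switch variable absorbed into the auxiliary system is exactly what the paper does (with $V$ in place of your $S$), and the chain-rule/non-negativity argument for the $I(U';C)$ constraint is identical. The only difference is that you spell out the supporting-hyperplane argument for~\eqref{eq:infimum} in more detail, whereas the paper simply invokes ``the method of Lagrange multipliers and concavity''; your extra care about the boundary $t=0$ and attainment via $\delta$-approximants is appropriate but not something the paper addresses explicitly.
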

\begin{proof}
%	[Proof of Theorem~\ref{theo:concavity}]
	We aim to prove
	\begin{align}
	F(\lambda t_0 + (1-\lambda) t_1) \geq \lambda F(t_0) + (1-\lambda) F(t_1),
	\end{align}
	for all $\lambda \in [0,1]$ and $t_0, t_1 \geq 0$.
	For every $\gamma>0$, let $\rho_0, \rho_1 \in \Sigma(\mathscr{W})$ such that
	$I(X;B|U)_{\rho_i} \geq F(t_i) - \gamma $ and $ I(U;C)_{\rho_i} \geq t_i$ for $i \in \{0,1\}$.
	
	Now, we introduce a new Bernoulli random variable $V$ with $\Pr\{V=0\} = \lambda$ and $ \Pr\{V = 1\} = (1-\lambda)$ such that
	\begin{align}
	\rho_{VUBC} := \lambda |0\rangle\langle 0| \otimes \rho_0 + (1-\lambda) |1\rangle\langle 1| \otimes \rho_1 \in \Sigma(\mathscr{W}).
	\end{align}
	From the choice of $\rho_0, \rho_1$, and $\rho$, we have
	\begin{align}
	\lambda F(t_0) + (1-\lambda) F(t_1) - \gamma
	&\leq \lambda I(X;B|U)_{\rho_0} + (1-\lambda) I(X;B|U)_{\rho_1} \\
	&= I(X;B|UV)_\rho.
	\end{align}
	On the other hand, using the chain rule and non-negativity of quantum mutual information, we have
	\begin{align}
	\lambda t_0 + (1-\lambda) t_1 &\leq
	\lambda I(U;C)_{\rho_0} + (1-\lambda) I(U;C)_{\rho_1} \\
	&= I(U;C|V)_{\rho} \\
	&= I(VU; C)_\rho - I(V;C)_\rho \\
	&\leq I(VU;C)_\rho.
	\end{align}
	This means that $\rho$ satisfies the constraint of  in the definition of $F(\lambda t_0 + (1-\lambda) t_1)$. Therefore,
	\begin{align}
	F(\lambda t_0 + (1-\lambda) t_1) &\geq I(X;B|VU)_\rho \\
	&\geq \lambda F(t_0) + (1-\lambda) F(t_1) - \gamma.
	\end{align}
	Since this holds for every $\gamma>0$, we conclude the proof by letting $\gamma\to 0$.
	The second assertion in \eqref{eq:infimum} follows from the method of Lagrange multipliers and the concavity of $F(t)$.
\end{proof}

%\printbibliography

%\bibliographystyle{myIEEEtran}
%\newcommand{\bibliographytypesize}{\large}
%\bibliography{reference}

% Generated by IEEEtran.bst, version: 1.14 (2015/08/26)

\end{document}